\documentclass[conference]{IEEEtran}
\IEEEoverridecommandlockouts
\usepackage{cite}
\usepackage{amsmath,amssymb,amsfonts}
\usepackage{graphicx}
\usepackage{textcomp}
\usepackage{xcolor}
\usepackage{amsthm}
\usepackage{bm}
\usepackage{overpic}
\usepackage{float}
\usepackage{amsthm}
\usepackage{algorithm,algorithmic}
\usepackage[caption=false,font=footnotesize]{subfig}

\theoremstyle{plain}
\newtheorem{theorem}{Theorem}

\newtheorem{lemma}{Lemma}
\newtheorem{corollary}{Corollary}

\makeatletter
\newcommand\fs@ruled@notop{\def\@fs@cfont{\bfseries}\let\@fs@capt\floatc@ruled
  \def\@fs@pre{}%
  \def\@fs@post{\kern2pt\hrule\relax}%
  \def\@fs@mid{\kern2pt\hrule\kern2pt}%
  \let\@fs@iftopcapt\iftrue}
\renewcommand\fst@algorithm{\fs@ruled@notop}
\makeatother

\begin{document}

\title{Fundamentals of Energy-Efficient Hardware Configurations for Wireless Links with Sleep Modes
}

\author{Anders~Enqvist, {\"O}zlem~Tu\u{g}fe~Demir, Cicek~Cavdar, and~Emil~Bj{\"o}rnson%
\thanks{The preliminary version \cite{enqvist2024fundamentals} of this work was presented at the 2024 IEEE Vehicular Technology Conference (VTC2024-Spring) in Singapore.}
\thanks{This work was supported by the FFL18-0277 grant from the Swedish Foundation for Strategic Research and the Swedish Innovation Agency (Vinnova) through the SweWIN center (2023-00572).}
\thanks{A. Enqvist, C. Cavdar, and E. Bj{\"o}rnson are with the Department of Communication Systems, KTH Royal Institute of Technology, SE-100 44 Stockholm, Sweden. Email: enqv@kth.se, cavdar@kth.se, emilbjo@kth.se}%
\thanks{{\"O}. T. Demir is with the Department of Electrical and Electronics Engineering, Bilkent University, 06800, Ankara, Türkiye. Email: ozlemtugfedemir@bilkent.edu.tr}%
}

\maketitle

\begin{abstract}
In this paper, we examine the energy efficiency (EE) of a base station (BS) with multiple antennas. We use a state-of-the-art power consumption (PC) model that captures the passive and active parts of the transceiver circuitry, including the effects of radiated power, signal processing, and passive consumption. The paper treats the transmit power, bandwidth, and number of antennas as the optimization variables. We provide novel closed-form solutions for the optimal ratios of power per unit bandwidth and power per transmit antenna, and discover a new relationship in which the radiated power equals the total transceiver power at the EE-optimal operating point. A central finding is that the EE-optimal signal-to-noise ratio (SNR) collapses to a universal numerical constant of approximately 5.93\,dB, independent of channel and hardware parameters. We present an algorithm that jointly optimizes the three design variables to achieve maximum EE under practical constraints, and provide analytical insight into whether maximum power or maximum bandwidth is optimal and how many antennas a BS should utilize. We further extend the optimization framework to incorporate quality-of-service (QoS) requirements and three advanced sleep modes of varying depth: absolute sleep, deep sleep, and idle mode. We characterize the optimal hardware configuration for each mode and determine when the rush-to-sleep strategy, which transmits briefly at the EE-optimal active configuration and sleeps the rest of the time, is optimal. Incorporating wake-up transition delays, we reveal how latency constraints and sleep-mode-specific transition times jointly dictate the optimal sleep mode for data packets with absolute deadlines. Together, these results indicate that energy-efficient operation requires treating transmission and sleep as a single coupled optimization.

\end{abstract}

\begin{IEEEkeywords}
Energy efficiency, optimization, 6G, multiple antenna communications, sleep modes.
\end{IEEEkeywords}

\section{Introduction}

Mobile networks consumed approximately $290$\,TWh of electricity in 2023, with the radio access network (RAN) alone accounting for up to $87\%$ of mobile-network energy use \cite{kolta2024measuring}. With traffic continuing to grow \cite{ericssonmobilityreport2023}, driven by video demand and emerging applications such as immersive media, AI-assisted services, and integrated sensing, total network energy consumption (EC) is projected to rise further. Energy-efficient design is increasingly studied for these applications \cite{yu2026tree, zou2024energy}. The need for energy efficiency (EE), defined as the data rate divided by the related power consumption (PC), applies regardless of which applications dominate. The International Telecommunication Union has designated sustainability as a capability of IMT-2030 (6G) and established targets to maximize transmitted bits per Joule \cite{itu2023imt2030}; a notable shift from earlier generations whose targets focused on peak rates and spectral efficiency. While much of wireless evolution has traditionally focused on increasing capacity through technologies such as massive MIMO (multiple-input multiple-output) \cite{bjornson2017massive} and wider bandwidths in mmWave and sub-Terahertz bands \cite{Rappaport2019a}, recent efforts increasingly complement these advances with energy-saving mechanisms. For example, standardized advanced sleep modes \cite{3gpp.38.864}, which temporarily deactivate hardware components during low-traffic periods, can significantly reduce RAN EC. A broader overview of these and related energy-saving strategies is provided in \cite{lopez2022survey}.

\subsection{Prior Work and Motivations}

EE optimization was pioneered in \cite{Verdu1990a}, which identified a tradeoff between EE and rates. The paper \cite{auer2011much} was an early demonstration that the power needed to run a wireless system is much more than just the transmit power, motivating more realistic PC models. The papers \cite{Bjornson2016aabb,Zappone2023tradeoff} present a network model that considers the optimization of the area power consumption and area EE under rate constraints. In coordinated multi-cell downlink, \cite{venturino2014energy} jointly optimized scheduling and power control, showing that the operating points maximizing EE and spectral efficiency become increasingly aligned under stringent transmit-power constraints.

A line of work has focused on accurate PC models for BS hardware. The parameterized PC model in \cite{bjornson2015a} captures the fundamental behaviors of BSs and was later extended to include carrier aggregation in \cite{lopez2021energy}. In \cite{piovesan2022machine,piovesan2022power}, machine learning (ML) was employed to fit measured hardware data, demonstrating that an analytical linear PC model with appropriately chosen constants can accurately represent the PC in current BSs. The paper \cite{muneer2020handling} analyzed how power amplifier (PA) nonlinearities force a tradeoff between minimizing the spread of out-of-band emissions and optimizing the PC.

BSs are not always transmitting due to network traffic fluctuations, making it important to determine how to power down parts of the hardware during idle periods using advanced sleep modes. From a network point of view, \cite{andersson2016energy} showed that macro BSs can maximize their time spent in sleep modes by offloading traffic to small cells, effectively minimizing the total network PC in an area. Recent works at the link level minimize the BS PC by either adjusting time-slot activation for single-antenna rush-to-sleep strategies \cite{rottenberg2024information}, or numerically optimizing time, space, and power resources in single-band systems \cite{peschiera2025optimizing}. In parallel, data-driven and network-level approaches for sleep-mode management under uncertainty have been gaining traction. Deep reinforcement learning assisted by digital twins has been used to anticipate performance degradation before deactivating BS components \cite{masoudi2022digital}. In multi-cell massive MIMO deployments, multi-agent reinforcement learning has been used to jointly control antenna switching and multi-level sleep modes while managing inter-cell interference \cite{cai2024multi}. Beyond conventional cellular topologies, EE optimization has been extended to cell-free Open-RAN architectures through end-to-end orchestration of radio, fronthaul, and cloud processing resources \cite{demir2024cell}.

While prior works have examined EE optimization in increasingly complex networks, most studies focus on network-level objectives, heuristic scheduling, or numerical resource allocation. This paper instead returns to the physical fundamentals by considering a single communication link between a multi-antenna BS and a single-antenna user equipment (UE). The simplified setting enables an analytical characterization of the EE-optimal operating point: closed-form scaling laws, and explicit relations between transmit power, bandwidth, and the number of antennas. When seeking such an EE-optimal design, the adopted PC model heavily dictates the solution. When only accounting for the transmit power, the optimum is achieved as the rate approaches zero \cite{Verdu1990a}. By contrast, \cite{bjornson2018energy} studied the upper EE limit with a more detailed PC model and demonstrated that very different operating points may be reached. Our framework follows the latter view, jointly optimizing transmit power, bandwidth, and the number of antennas alongside the choice of sleep mode.

The importance of traffic-aware operation is supported by current network statistics. Downlink traffic is predominantly bursty and heterogeneous: 95\% of sessions carry less than 1\,Mbit, while only 1\% of sessions—those exceeding 20\,Mbit—account for more than 70\% of the total traffic volume \cite{ericssonmobilityreport2023}. While most sessions therefore demand modest throughput, traffic exhibits a wide range of payload sizes and latency requirements. Maximizing EE consequently requires adapting both the transmission configuration and sleep-mode operation to the traffic demand.

The analysis in this paper yields several novel results. We discover a new equality at the EE-optimal solution, in which the radiated power equals the total transceiver power. We further show that the EE-optimal signal-to-noise ratio (SNR) collapses to a universal numerical constant of approximately $5.93$\,dB, independent of channel and hardware parameters, and derive closed-form solutions for the optimal ratios of power per bandwidth and power per antenna. We integrate three advanced sleep modes (absolute, deep, and idle) into the optimization framework, characterize the optimal hardware configuration for each mode, and reveal how transition delays and latency constraints jointly shape sleep-mode selection. Finally, we develop algorithms that jointly optimize the design variables under both unconstrained and latency-constrained operation.

\subsection{Contributions}

In this paper, we aim to answer the research questions:
\begin{itemize}
    \item How should the transmit power, bandwidth, and number of antennas be jointly configured to maximize EE? 
    \item Are there any tangible relationships between these parameters at the optimal solution?
    {\color{black}
     \item How do different sleep modes impact the achieved EE when there are strict rate requirements?
     \item How do latency constraints and wake-up transition times determine the sleep-mode selection for data packets?
    }
\end{itemize}

One significant departure from previous papers is our emphasis on analytical scaling behaviors. While prior research primarily focused on optimizing individual parameters, our work extends these models to explore the global optimum. Compared to the conference version \cite{enqvist2024fundamentals}, this paper establishes the existence of an EE-optimal SNR and provides a joint optimization of hardware configuration and sleep mode selection under quality-of-service (QoS) requirements.

The remainder of this paper is organized as follows. Section \ref{sec:system_model} introduces the downlink system model. Section \ref{sec:EERatio} establishes the fundamental scaling behaviors of the EE, analytically deriving optimal ratios for power over bandwidth and power per transmit antenna. In Section \ref{sec:VarOpt}, we optimize the transmit power, bandwidth, and number of antennas, and we propose an algorithm that achieves their joint maximization of the EE. Section \ref{sec:lowrate} extends the framework to incorporate QoS constraints alongside sleep modes, revealing how the BS should adapt its transmission configuration and utilize rush-to-sleep strategies. Section \ref{sec:Latency} further extends the framework with wake-up transition delays and proposes an algorithm that selects the optimal sleep mode and hardware configuration under latency constraints. Finally, Section \ref{sec:conclusion} concludes the paper and suggests future work directions. %

\section{System Model} \label{sec:system_model}

We analyze and optimize the EE of the downlink between a BS using $M$ antennas and a single-antenna UE. The carrier bandwidth is $B$, and the channel is represented by $\mathbf{h}\in \mathbb{C}^M$, where the squared magnitude of each entry is $\beta$. Hence, $\|\mathbf{h}\|^2=M\beta$. The mathematical tractability of the model enables us to evaluate complex scenarios, including varying data rate and latency demands, and different depths of hardware sleep modes in the later sections. %
This is a typical model for a line-of-sight channel. The received downlink signal $y$ is given by
\begin{equation}
    y=\mathbf{h}^{T}\mathbf{p}x+n,
\end{equation}
where $x$ is the data signal, $n\sim \mathcal{N}_\mathbb{C}(0,B N_0 )$ is the independent receiver noise, and $\mathbf{p}\in\mathbb{C}^{M}$ is the unit-norm precoding vector.  Assuming the BS has perfect channel state information (CSI), the capacity of this multiple-input single-output (MISO) channel is achieved by $x \sim \mathcal{N}_\mathbb{C}(0,P)$, where $P$ is the transmit power budget, and the precoding vector $\mathbf{p}=\mathbf{h}^*/\|\mathbf{h}\|$.  The capacity can then be expressed as \cite{Lo1999a,bjornson2017massive}
\begin{equation}
\label{eq:capacity}
    C = B \log_2 \left( 1 + \mathrm{SNR} \right),
\end{equation}
where the signal-to-noise ratio is
\begin{equation} \label{eq:SNR_def}
    \mathrm{SNR} = \frac{M P \beta}{B N_0}.
\end{equation}

Notice that we have made modeling assumptions that enable exact mathematical analysis. However, the analytical insights developed in this paper also hold qualitatively for other channel types, such as Rayleigh fading with variance $\beta$.

\subsection{Power Consumption}

To measure the EE, we must first model the PC. We adopt the model from \cite{lopez2021energy} for single-layer transmission in a single band to a single-antenna UE. The total PC at the BS is
\begin{align}
\label{PC_model}\mathrm{PC}=&P/\kappa+P_\mathrm{FIX}+P_\mathrm{SYN}+D_0 M + D_1 M+\eta C, 
\end{align}
where $\kappa \in (0,1]$ is the PA efficiency and $P_\mathrm{FIX}$ is the load-independent power consumption required for cooling, control signaling, backhaul infrastructure, and baseband processors. $P_\mathrm{SYN}$ is the load-independent power consumed by the local oscillator. $D_0$ is the power consumed by each transceiver chain of the BS (e.g., converters, mixer, filters, etc.). The coefficient $D_1$ determines the per-antenna power consumed by the signal processing and bandwidth-dependent transceiver components (e.g., analog-to-digital converters (ADCs), digital-to-analog converters (DACs), and channel-estimation/precoding hardware). $\eta$ regulates the power consumed by the signal coding at the BS and %
the backhaul signaling, both of which are proportional to the capacity $C$. To simplify the notation and expose the optimization variables, we rewrite \eqref{PC_model} as %
\begin{equation}
\label{eq:PC}
\mathrm{PC}=P/\kappa + \mu+( D_0+\nu B)  M + \eta B \log_2 \left( 1 + \frac{M P \beta}{B N_0} \right),
\end{equation}
where $\mu=P_\mathrm{FIX}+P_\mathrm{SYN}$ denotes the fixed circuit PC from circuitry and synchronization, and $\nu=D_1/B$ is introduced to highlight that the signal processing is carried out on the sampling rate (which is proportional to the bandwidth).

\subsection{Energy Efficiency}

In this paper, we focus on optimizing EE \cite{Verdu1990a,bjornson2017massive} under different constraints. The EE is defined as the amount of data transferred per unit energy (measured in bit/Joule and equivalently bit/s/Watt). Dividing the channel capacity in \eqref{eq:capacity} by the PC in \eqref{eq:PC}, we can define the $\mathrm{EE}$ as
\begin{equation} \label{eq:EE}
\mathrm{EE} =  \frac{B \log_2 \left( 1 + \frac{M P \beta}{B N_0} \right) }{P/\kappa + \mu  + ( D_0+\nu B )M + \eta B \log_2 \left( 1 + \frac{M P \beta}{B N_0}  \right) }. 
\end{equation}
In the following sections \ref{sec:EERatio} and \ref{sec:VarOpt}, we will study the scaling behaviors of the EE with the bandwidth $B$, power $P$, and number of antennas $M$. In particular, we will derive the optimal pairwise ratios for these three design parameters and then develop an algorithm to find the global optimum.

\section{EE-Optimal Parameter Ratios} \label{sec:EERatio}

In this section, we will prove that the optimization variables $B$, $P$, and $M$ satisfy specific ratios at the EE-optimal system operation. These results serve as design guidelines and form the analytical foundation for the joint optimization Algorithm~1 developed in Section \ref{subsection:algorithm}, which is utilized in the remainder of the paper.

\subsection{Power per Antenna}
It is analytically possible to optimize the transmit power per antenna, $P/M$, which yields fundamental insights into BS design. In practice, there may be upper bounds on both parameters that prevent us from achieving the optimal ratio. However, in case the value ranges of $P$ and $M$ are not constrained, or the maximum of the EE in \eqref{eq:EE} is reached without invalidating the bounds $M_\mathrm{max}$ and $P_\mathrm{max}$, the following result is true:

\begin{theorem} \label{PoverM} 
If the solution $(P_\mathrm{opt},M_\mathrm{opt})$ that maximizes the EE in \eqref{eq:EE} or minimizes the PC in \eqref{eq:PC} subject to a capacity constraint for a given value of $B$ satisfies $P_\mathrm{opt} \leq P_\mathrm{max}$ and $M_\mathrm{opt} \leq M_\mathrm{max}$, then the following relation holds:
\begin{equation} \label{eq:optimal-ratio-PM}
\frac{P_\mathrm{opt}}{M_\mathrm{opt}}=\kappa(D_0+\nu B).
\end{equation}
\end{theorem}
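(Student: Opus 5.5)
Fix $B$ and note that $P$ and $M$ enter the capacity only through the product $x = MP$, since $\mathrm{SNR} = MP\beta/(BN_0)$. The plan is to reparametrize the pair $(P,M)$ as $(x,M)$ with $P = x/M$. At a fixed $x$ the capacity $C$ is fixed, and therefore so is the term $\eta C$. The only parts of \eqref{eq:PC} that still depend on how $x$ is split between $P$ and $M$ are
\begin{equation*}
f(M) = \frac{x}{\kappa M} + (D_0+\nu B)M .
\end{equation*}
Both problems in the statement reduce to minimizing $f$ at fixed $x$. For EE maximization, the numerator of \eqref{eq:EE} depends only on $x$, so at the optimum $M$ must minimize the denominator for that $x$. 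For PC minimization under a capacity constraint, the constraint restricts $x$ only.

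Concretely, I would argue by optimality of $(P_\mathrm{opt},M_\mathrm{opt})$. Set $x_\mathrm{opt} = P_\mathrm{opt}M_\mathrm{opt}$ and consider the one-parameter family $(x_\mathrm{opt}/M, M)$. Every member has the same capacity, so every member is feasible for the capacity-constrained problem and has the same EE numerator. The optimal point must therefore minimize $f$ along this curve, at least locally within the feasible box. Since $f$ is strictly convex in $M>0$ (its second derivative is $2x/(\kappa M^3) > 0$), its unique stationary point is its global minimizer. Setting
\begin{equation*}
f'(M) = -\frac{x}{\kappa M^2} + D_0 + \nu B = 0
\end{equation*}
gives $x/M^2 = \kappa(D_0+\nu B)$. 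Substituting $x = PM$ yields $P/M = \kappa(D_0+\nu B)$, which is \eqref{eq:optimal-ratio-PM}. Equivalently, one can apply AM--GM to $f$: equality holds exactly when $P/\kappa = (D_0+\nu B)M$, which says the radiated-power term equals the per-antenna transceiver term.

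The main obstacle is handling the bounds rigorously, together with the fact that $M$ is really an integer. The hypothesis $P_\mathrm{opt} \le P_\mathrm{max}$, $M_\mathrm{opt} \le M_\mathrm{max}$ should be read as saying the optimum is not pinned by the constraints. I will treat $M$ as a continuous variable, as the statement implicitly does. I will then assume the bounds are inactive, meaning the optimum lies in the interior so that small perturbations along the curve remain feasible. Under that reading, the first-order condition along the curve must hold and the result follows directly. To make this airtight, I would phrase the argument by contradiction: if the ratio in \eqref{eq:optimal-ratio-PM} failed, moving $M$ slightly toward $\sqrt{x_\mathrm{opt}/(\kappa(D_0+\nu B))}$ along the curve would strictly decrease PC at unchanged capacity. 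For EE this strictly increases the objective, and for the constrained problem it strictly decreases PC while staying feasible. Either way this contradicts optimality.
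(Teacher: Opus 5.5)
Your proposal is correct and follows essentially the same route as the paper. Both fix the product $MP$ (and with it the capacity, the EE numerator, and $\eta C$), then minimize $P/\kappa + (D_0+\nu B)M$ along that curve, so the optimum is characterized by equality of the two terms. The paper obtains this directly from AM--GM, whereas you use the first-order condition and strict convexity of $f(M)$, noting AM--GM as equivalent. Your reading of the bound hypothesis as ``bounds inactive'' and your perturbation-by-contradiction step make explicit what the paper leaves implicit.
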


\begin{IEEEproof}
$P$ and $M$ jointly determine $C$ strictly through their product $M P$ inside the capacity equation \eqref{eq:capacity}. Therefore, to optimize the PC for any target data rate, the terms in the PC model \eqref{PC_model} that depend on these variables, specifically $\frac{P}{\kappa} + (D_0 + \nu B)M$, must be minimized. We apply the arithmetic mean-geometric mean inequality to these PC terms, which gives
\begin{equation} \label{eq:AMGM}
    \frac{P}{\kappa} + (D_0 + \nu B)M \ge 2\sqrt{\frac{P}{\kappa} (D_0 + \nu B)M}.
\end{equation}
For any required product $M P$, the right side of this inequality represents a lower bound. The minimum PC is achieved if and only if the two terms on the left-hand side of the inequality are equal. Rearranging this equality directly yields the optimal ratio $\frac{P}{M} = \kappa(D_0 + \nu B)$.
\end{IEEEproof}

{\color{black}Theorem \ref{PoverM} has several interesting implications. Because the relationship minimizes the PC for any target product $M P$, it holds universally regardless of the specific higher-level optimization objective.} The right-hand-side in \eqref{eq:optimal-ratio-PM} grows as either $\kappa$, $D_0$, $\nu$, or $B$ increase. It is evident that, as the computational cost $\nu B$ associated with increased bandwidth increases or the PA is of higher quality (i.e., larger $\kappa$), we can afford to transmit more power per antenna when reaching the EE-optimal solution. Moreover, if we can afford to use more antennas to gain higher EE, we should also increase the transmit power to maintain the power per antenna.

Further insights are obtained by rearranging \eqref{eq:optimal-ratio-PM} so that
\begin{equation} \label{eq:optimal-ratio-PM2}
    \frac{P_\mathrm{opt}}{\kappa}=(D_0+\nu B)M_\mathrm{opt}.
\end{equation}
We recognize that $P/\kappa + (D_0+\nu B)M$ appears directly in the PC consumption model in \eqref{eq:PC}. Hence, \eqref{eq:optimal-ratio-PM2} tells us that at the EE-optimal point, the input transmit power $P/\kappa$ is always identical to the power $(D_0+\nu B)M$, i.e., the passive PC in the transceiver chains for all the antennas $D_0 M$ plus the power dissipated in the ADCs and DACs in these transceiver chains $\nu B M$. 

As a side note, the solution in \eqref{eq:optimal-ratio-PM} and \eqref{eq:optimal-ratio-PM2} is strictly true only if $M$ is allowed to attain a non-integer value. From that perspective, it will only be approximately true in practice.

\subsection{Power per Bandwidth Unit}

By dividing the numerator and denominator of \eqref{eq:EE} by $B$, the EE can be rewritten as
\begin{equation} \label{eq:EEdivB}
\mathrm{EE} \!=\!  \frac{\log_2 \left( 1 + \frac{M P \beta}{B N_0} \right) }{P/(\kappa B) + \mu/B  + D_0 M/B+\nu M + \eta \log_2 \left( 1 \!+\! \frac{M P \beta}{B N_0}  \right) }.
\end{equation}
It is apparent from \eqref{eq:EEdivB} that power and bandwidth mostly appear as a ratio $P/B$, which is the power spectral density. The terms $D_0 M /B$ and $\mu/B$ are the only ones that do not fit this structure. However, in the asymptotic limit, these terms are negligible compared to those that depend on both bandwidth and power. This leads to the following key result:

\begin{theorem} \label{maximum-EE-lemma}
When the term $\mu/B  + D_0 M/B$ is negligible, the $\mathrm{EE}$ in \eqref{eq:EEdivB} is maximized when $P$ and $B$ satisfy the ratio 
\begin{equation} \label{eq:optimal-ratio}
\frac{P}{B} = N_0 \frac{e^{u}-1}{M \beta},
\end{equation}
where
\begin{equation}
u= W \left(\frac{\kappa M^2 \beta  \nu}{N_0 e} -\frac{1}{e}\right)+1
\end{equation}
and $W (\cdot)$ denotes the Lambert W function, defined by the equation $x = W(x)e^{W(x)}$ for any $x \in \mathbb{C}$.
\end{theorem}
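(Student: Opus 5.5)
The plan is to reduce the EE in \eqref{eq:EEdivB} to a function of the single variable $x=\frac{MP\beta}{BN_0}$ (the SNR in \eqref{eq:SNR_def}) for fixed $M$. Then I solve the first-order condition in closed form via the Lambert $W$ function and argue that the stationary point is the global maximum.

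\textbf{Step 1 (reduction to one variable).} After dropping $\mu/B + D_0M/B$, use $P/(\kappa B) = \frac{N_0}{\kappa M\beta}x$ and set $a=\frac{N_0}{\kappa M\beta}$. The EE becomes
\begin{equation*}
\mathrm{EE}(x)=\frac{\log_2(1+x)}{a x+\nu M+\eta\log_2(1+x)}=\frac{1}{\eta+\frac{a x+\nu M}{\log_2(1+x)}} .
\end{equation*}
The constant $\eta$ therefore plays no role in the maximizer. Maximizing EE is equivalent to maximizing $g(x)=\ln(1+x)/(a x+\nu M)$ over $x>0$. The base-$2$ logarithm only contributes a constant factor.

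\textbf{Step 2 (stationarity and Lambert $W$).} Setting $g'(x)=0$ gives
\begin{equation*}
\frac{a x+\nu M}{1+x}=a\ln(1+x).
\end{equation*}
Substitute $1+x=e^{u}$ and divide by $a$. This yields $e^{u}-1+\frac{\nu M}{a}=u e^{u}$, i.e., $(u-1)e^{u}=\frac{\nu M}{a}-1$. Multiplying by $e^{-1}$ gives
\begin{equation*}
(u-1)e^{u-1}=\frac{\kappa M^2\beta\nu}{N_0 e}-\frac1e .
\end{equation*}
Hence $u-1=W\!\left(\frac{\kappa M^2\beta\nu}{N_0e}-\frac1e\right)$. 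Recovering $x=e^{u}-1$ and $P/B = N_0 x/(M\beta)$ gives exactly \eqref{eq:optimal-ratio}.

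\textbf{Step 3 (branch and global optimality).} This is the main point needing care. The argument of $W$ is at least $-1/e$ because $\kappa,M,\beta,\nu,N_0\ge 0$. We need $x>0$, i.e., $u>0$, i.e., $u-1>-1$. Only the principal branch $W_0$ satisfies this, and it does so whenever $\nu>0$. The lower branch would give $u\le 0$, which is infeasible.

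For global optimality, $g$ is a ratio of a positive concave function $\ln(1+x)$ and a positive affine function. Such a ratio is strictly quasiconcave, with convex superlevel sets $\{\ln(1+x)-t(ax+\nu M)\ge 0\}$. Moreover, $g(x)\to 0$ both as $x\to 0^+$ and as $x\to\infty$. So the unique stationary point from Step 2 is the global maximizer.

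Finally, for a fixed $M$, the EE depends on $(P,B)$ only through $x$, i.e., through $P/B$. The optimum is thus characterized by the stated ratio, which completes the proof.
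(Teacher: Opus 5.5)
Your proof is correct and follows the paper's route: drop $\mu/B + D_0M/B$, reduce the EE to a single variable proportional to $P/B$, and solve the first-order condition with the Lambert $W$ function. The paper uses $z=P/B$ where you use the SNR $x=M\beta z/N_0$, and it delegates the stationarity, branch selection and global optimality to \cite[Lem.~3]{bjornson2015a}; you derive that step explicitly, and your derivation is correct.
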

\begin{IEEEproof}
By defining $z= P/B$, and letting $\mu,D_0 \to 0$, we have that \eqref{eq:EEdivB} can be expressed as
\begin{equation} \label{eq:EE-first-step-SISO-circuit-power2-z}
  \frac{ \log_2 \left( 1 + \frac{M \beta}{N_0} z \right) }{\frac{z}{\kappa} + \nu M + \eta \log_2 \left( 1 + \frac{M \beta}{ N_0} z \right) }.
\end{equation}
The maximum in \eqref{eq:optimal-ratio} is obtained by utilizing \cite[Lem.~3]{bjornson2015a}.
\end{IEEEproof}

A consequence of Theorem \ref{maximum-EE-lemma} is that the maximum EE is achieved by any $P$ and $B$ satisfying \eqref{eq:optimal-ratio} that are sufficiently large to make $\mu/B + D_0 M / B$ negligible. The data rate at the optimum is $C = Bu \log_2(e)$, so $B$ can be chosen freely to deliver any desired rate. If $P$ and $B$ are not limited by external factors, EE and rate are permitted to grow together, with no tradeoff between them as conventionally claimed \cite{Verdu1990a,bjornson2017massive}.

Substituting \eqref{eq:optimal-ratio} into \eqref{eq:EE-first-step-SISO-circuit-power2-z} gives the maximum EE as a function of $M$:
\begin{equation} \label{eq:EE_max_initial}
    \mathrm{EE}_\mathrm{max}(M) = \frac{u \log_2(e)}{N_0\frac{e^u-1}{\kappa M \beta}+\nu M + \eta u \log_2(e)},
\end{equation}
where the effects of $\mu$ and $D_0$ become negligible. From Theorem~\ref{maximum-EE-lemma} and the defining property of the Lambert $W$ function, $u$ satisfies the implicit relation $(u-1)e^u = \frac{\kappa M^2 \beta \nu}{N_0} - 1$. Factoring $\frac{N_0}{\kappa \beta M}$ out of the first two terms in the denominator of \eqref{eq:EE_max_initial} and substituting the implicit relation yields the exact cancellation
\begin{equation}
    \frac{N_0(e^u-1)}{\kappa \beta M} + \nu M  = \frac{N_0}{\kappa \beta M}\, u\, e^u,
\end{equation}
so that
\begin{equation} \label{eq:EE_max}
    \mathrm{EE}_\mathrm{max}(M) = \frac{\log_2(e)}{\frac{N_0}{\kappa \beta} \frac{e^u}{M} + \eta \log_2(e)}.
\end{equation}

We illustrate the behavior of \eqref{eq:EE_max} in Fig.~\ref{fig:MaxEEofM} with the simulation parameters given in Table \ref{tab:table1}. The maximum EE (encircled) is obtained at $M=2$ for $\beta=-100\,$dB, $M=6$ for $\beta=-110\,$dB, and $M=20$ for $\beta=-120\,$dB. The optimal $M$ increases as $\beta$ decreases, scaling by a factor of $10$ when $\beta$ decreases by $20\,$dB. The corresponding optimal power spectral densities are $P/B=79\,$mW/GHz for $\beta=-100\,$dB, $P/B=247\,$mW/GHz for $\beta=-110\,$dB, and $P/B=792\,$mW/GHz for $\beta=-120\,$dB, showing that $P/B$ must increase to overcome larger pathloss. Strikingly, the corresponding SNRs at these three operating points are nearly identical: $6.00\,$dB, $5.71\,$dB, and $6.00\,$dB, respectively. The fact that all three cluster near $6\,$dB despite a $20\,$dB spread in channel gain is not a coincidence; the following theorem characterizes the global optimum of \eqref{eq:EE_max} in closed form and explains the clustering.

\begin{theorem} \label{thm:universal_snr}
The upper bound $\mathrm{EE}_\mathrm{max}(M)$ in \eqref{eq:EE_max} achieves its global maximum at the universal operating SNR
\begin{equation} \label{eq:SNR_star_constant}
    \mathrm{SNR}^\star = e^{u^\star} - 1 \approx 5.93\text{\,dB},
\end{equation}
where $u^\star$ is the numerical constant
\begin{equation} \label{eq:ustar_thmIII}
    u^\star = 2 + W(-2 e^{-2}) \approx 1.5936.
\end{equation}
The corresponding spectral efficiency is the universal constant
\begin{equation} \label{eq:SE_star_constant}
    \mathrm{SE}^\star = u^\star \log_2(e) \approx 2.30 \text{ bit/s/Hz}.
\end{equation}
The optimum is achieved at the antenna count
\begin{equation} \label{eq:M_star_analytical}
    M^\star = \sqrt{\frac{(e^{u^\star} - 1) N_0}{\kappa \beta \nu}},
\end{equation}
and the maximum EE is
\begin{equation} \label{eq:EE_star_ultimate}
    \mathrm{EE}^\star = \frac{\mathrm{SE}^\star}{2 \nu M^\star + \eta\, \mathrm{SE}^\star}.
\end{equation}
\end{theorem}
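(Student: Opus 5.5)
Maximizing $\mathrm{EE}_\mathrm{max}(M)$ in \eqref{eq:EE_max} is equivalent to minimizing $g(M)=e^{u(M)}/M$ over $M>0$, since the denominator is increasing in $e^u/M$ and $\eta\log_2(e)$ does not depend on $M$. Here $u=u(M)$ is defined implicitly by the relation $(u-1)e^u = \frac{\kappa M^2\beta\nu}{N_0}-1$ derived after Theorem~\ref{maximum-EE-lemma}. I would first introduce the normalized variable $a = \kappa\beta\nu/N_0$, so that $(u-1)e^u = aM^2-1$. The map $u\mapsto (u-1)e^u$ is strictly increasing for $u>0$, so $u$ is a smooth increasing function of $M$ on the relevant branch of $W$. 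Implicit differentiation gives $u e^u\, u'(M) = 2aM$.

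Next I would compute $g'(M) = e^u(u' M - 1)/M^2$. Setting it to zero gives $u'M=1$, which combined with the implicit derivative yields $u e^u = 2aM^2$. I would then eliminate $aM^2$ using the defining relation $aM^2 = (u-1)e^u + 1$. This leads to the scalar equation
\begin{equation*}
u e^u = 2(u-1)e^u + 2 \quad\Longleftrightarrow\quad (2-u)e^{u} = 2 ,
\end{equation*}
which no longer involves $a$; this is the source of universality. Writing $w = u-2$ turns it into $-w e^{w} = 2e^{-2}$, that is, $w e^w = -2e^{-2}$. Hence $w = W(-2e^{-2})$. Here the principal branch must be chosen, because the other branch $W_{-1}(-2e^{-2}) = -2$ gives the trivial root $u=0$, i.e.\ $M=0$. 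This produces $u^\star = 2 + W(-2e^{-2})\approx 1.5936$, and then $\mathrm{SNR}^\star = e^{u^\star}-1$ follows directly from \eqref{eq:optimal-ratio} and \eqref{eq:SNR_def}, since $MP\beta/(BN_0)=e^u-1$. The spectral efficiency $\mathrm{SE}^\star = u^\star\log_2(e)$ follows from $C=Bu\log_2(e)$.

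To recover $M^\star$, I would use $a M^2 = (u-1)e^u+1$ at $u^\star$ together with $(2-u^\star)e^{u^\star}=2$. These give $aM^{\star 2} = u^\star e^{u^\star}/2$. I would then verify that this equals $e^{u^\star}-1$: since $e^{u^\star} = 2/(2-u^\star)$, we have $e^{u^\star}-1 = u^\star/(2-u^\star) = u^\star e^{u^\star}/2$, so the identity holds and gives \eqref{eq:M_star_analytical}. For $\mathrm{EE}^\star$, I would substitute into \eqref{eq:EE_max} and use
\begin{equation*}
\frac{N_0 e^{u^\star}}{\kappa\beta M^\star} = \frac{2aM^{\star 2}}{u^\star}\cdot\frac{N_0}{\kappa\beta M^\star} = \frac{2\nu M^\star}{u^\star}.
\end{equation*}
Multiplying the numerator and denominator by $u^\star$ then gives \eqref{eq:EE_star_ultimate}.

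The main obstacle is showing that this stationary point is the \emph{global} maximum rather than merely a critical point. My plan is to study the sign of $u'M-1$, which equals $2aM^2/(ue^u)-1 = h(u) := 2[(u-1)e^u+1]/(ue^u) - 1$, and to note that $u$ is monotone in $M$. Simplifying, $h(u) = \big(2-(2-u)e^u\big)/(ue^u)$ up to a positive factor. The function $(2-u)e^u$ increases on $(0,1)$ and decreases afterward, and it crosses the value $2$ exactly once for $u>0$, at $u^\star$. Therefore $g'$ is negative for $u<u^\star$ and positive for $u>u^\star$, so $g$ is unimodal with its unique minimum at $M^\star$. I would also check the boundary behavior, $g\to\infty$ as $M\to0^+$ (for $u$ small) and as $M\to\infty$, to confirm that the optimum is global. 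Finally, I would note that integrality of $M$ is ignored, consistent with the remark after Theorem~\ref{PoverM}.
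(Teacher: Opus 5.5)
Your proposal is correct and follows the same route as the paper: you minimize $e^{u}/M$, implicitly differentiate $(u-1)e^u = aM^2-1$ to get the stationarity condition $ue^u = 2aM^2$, eliminate $aM^2$ to reach the parameter-free equation $(2-u)e^u = 2$, and solve it with the principal Lambert $W$ branch. Your additional steps supply rigor that the paper's proof leaves implicit: you discard the $W_{-1}$ root $u=0$, you explicitly verify $u^\star e^{u^\star}/2 = e^{u^\star}-1$, and your sign analysis of $2-(2-u)e^u$ shows unimodality in $M$, which is what justifies the theorem's claim that the stationary point is the global maximum.
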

\begin{IEEEproof}
With $\mu$ and $D_0$ being negligible, Theorem~\ref{PoverM} reduces to $P/M = \kappa \nu B$, so the operating SNR satisfies
\begin{equation}
\mathrm{SNR} = \frac{MP\beta}{BN_0} = \frac{M^2 \kappa \nu \beta}{N_0} = A M^2,
\end{equation}
with $A \triangleq \kappa \beta \nu / N_0$. To find the optimal $M$, we maximize $\mathrm{EE}_\mathrm{max}(M)$ in \eqref{eq:EE_max}. Since the variable part of the denominator is $g(M)\triangleq e^u/M$, the stationary condition $g'(M)=0$ yields $u' = 1/M$. Differentiating the implicit relation $(u-1)e^u = AM^2 - 1$ from Theorem~\ref{maximum-EE-lemma} gives $ue^u u' = 2AM$, and substituting $u'=1/M$ yields
\begin{equation}\label{eq:stationarity}
ue^u = 2AM^2.
\end{equation}
Combining with $(u-1)e^u + 1 = AM^2$ gives $(u-2)e^{u-2} = -2e^{-2}$, which the principal Lambert $W$ branch solves as $u^\star = 2+W(-2e^{-2}) \approx 1.5936$.

From \eqref{eq:stationarity}, $A(M^\star)^2 = e^{u^\star} - 1$, so $\mathrm{SNR}^\star = e^{u^\star} - 1$ and $M^\star$ is given by \eqref{eq:M_star_analytical}. Equation \eqref{eq:SE_star_constant} follows from the capacity expression, and substituting $u^\star$ and $M^\star$ into \eqref{eq:EE_max} yields \eqref{eq:EE_star_ultimate}.
\end{IEEEproof}

if $\eta$ is very small, the maximum EE approximately scales as $\sqrt{\kappa\beta/(\nu N_0)}$, i.e., with the square root of the channel-to-hardware ratio. Theorem \ref{thm:universal_snr} provides a strong design guideline: the EE-maximizing operating point is fixed at $\mathrm{SNR}^* \approx 5.93$\,dB regardless of channel gain, noise, bandwidth, or hardware constants. The corresponding spectral efficiency is $u^\star\log_2(e) \approx 2.30$ bit/s/Hz, which can be achieved using 16-QAM and a forward error correction (FEC) with coding rate $0.58$ \cite{3gpp_ts_38_214}.

\begin{figure}[t!]
    \centering
    \includegraphics[width=0.97\columnwidth]{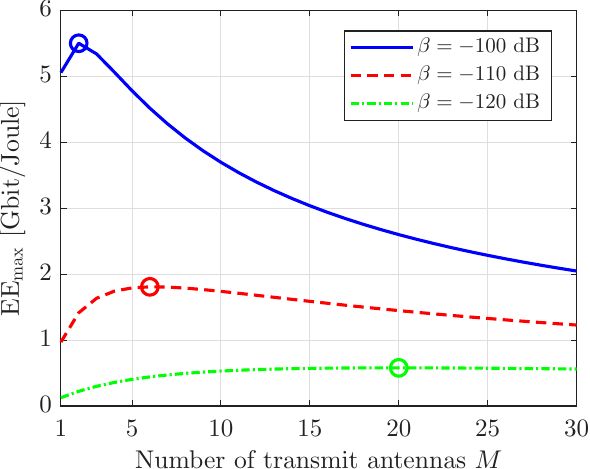}
    \caption{Maximum EE vs. number of transmit antennas $M$. The maximum energy efficiency $\mathrm{EE}_\mathrm{max}$ as defined in \eqref{eq:EE_max} is attained for the finite number $M^\star$ at an optimal ratio $P/B$.}
    \label{fig:MaxEEofM}
    \vspace{-4mm}
\end{figure}

\begin{table}[t!]
  \begin{center}
    \caption{Simulation Parameters}
    \label{tab:table1}
    \begin{tabular}{|l|r|} %
     \hline
      \textbf{Parameter} & \textbf{Value} \\
      \hline
      Passive circuit power: $\mu$ & $1000$\,mW \\
      Transceiver chain power consumption: $D_0$ & $200$\,mW \\
      Sample processing power consumption: $\nu$ & $10^{-10}$\,J/sample \\
      Power amplifier efficiency: $\kappa$ & 0.4 \\
      Computational efficiency: $\eta$ & $10^{-11}$\,J/bit\\
      Maximum bandwidth: $B_\mathrm{max}$ & $10$\,GHz \\
      Maximum power: $P_\mathrm{max}$ & 40\,dBm \\
      Maximum number of transmit antennas: $M_\mathrm{max}$ & $512$ \\
      Receiver noise power spectral density: $N_0$ & $-174$\,dBm/Hz \\
      Channel gain: $\beta$ & $-110$\,dB\\
       \hline
    \end{tabular}
  \end{center}
\end{table}

\section{Variable Optimization} \label{sec:VarOpt}

While the previous section established relations among the optimization variables, this section shows how to optimize the EE with respect to each of the variables $P$, $M$, and $B$ when the others are fixed. These results provide insights into the solution structure and serve as the necessary building blocks for developing a joint optimization algorithm in Section~\ref{subsection:algorithm}. The first result considers optimizing $P$.

\begin{lemma} \label{maximum-EE-P}
The EE in \eqref{eq:EE} for a given $B,M$ is maximized with respect to $P$ by
\begin{equation} \label{eq:optimal-ratio-2}
P = B N_0 \frac{e^{v}-1}{M \beta},
\end{equation}
where
\begin{equation}
v= W \left(\frac{\kappa M \beta(\mu+(D_0+\nu B)M)}{B N_0 e} -\frac{1}{e}\right)+1.
\end{equation}

\end{lemma}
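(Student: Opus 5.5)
With $B$ and $M$ fixed, the plan is to change variables so that the EE becomes a function of a single scalar, and then apply the same single-variable maximization result that underlies Theorem~\ref{maximum-EE-lemma}, namely \cite[Lem.~3]{bjornson2015a}. A cross-check by direct differentiation will confirm the stationarity condition.

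Introduce the natural-log rate variable
\[
v = \ln\!\left(1+\frac{MP\beta}{BN_0}\right),
\qquad\text{so that}\qquad
P=\frac{BN_0}{M\beta}(e^v-1).
\]
Since $P\mapsto v$ is a strictly increasing bijection from $[0,\infty)$ onto itself, maximizing over $P$ is the same as maximizing over $v$. Substituting into \eqref{eq:EE} and dividing the numerator and denominator by $B\log_2(e)$ gives
\[
\mathrm{EE}(v)=\frac{v}{a(e^v-1)+c+\eta\log_2(e)\,v}\cdot\frac{1}{\log_2 e}\cdot \log_2 e ,
\]
where
\[
a=\frac{N_0}{\kappa M\beta\log_2(e)},
\qquad
c=\frac{\mu+(D_0+\nu B)M}{B\log_2(e)} .
\]
The term $\eta C$ adds a constant $\eta$ to $1/\mathrm{EE}$. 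Maximizing EE is therefore equivalent to minimizing
\[
f(v)=\frac{a(e^v-1)+c}{v}, \qquad v>0.
\]

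The key step is the stationarity condition $f'(v)=0$, which reads
\[
a v e^v = a(e^v-1)+c,
\qquad\text{i.e.}\qquad
(v-1)e^v=\frac{c}{a}-1 .
\]
Writing $x=v-1$, this becomes $xe^{x}=\bigl(\tfrac{c}{a}-1\bigr)/e$. Substituting $c/a=\kappa M\beta(\mu+(D_0+\nu B)M)/(BN_0)$ (the $\log_2 e$ factors cancel), the principal Lambert $W$ branch gives exactly the stated
\[
v = W\!\left(\frac{\kappa M\beta(\mu+(D_0+\nu B)M)}{BN_0e}-\frac1e\right)+1 .
\]
The argument exceeds $-1/e$ because $c/a>0$. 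Hence the principal branch applies and $v>0$.

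The main obstacle is showing that this stationary point is the unique global maximizer rather than just a critical point. For this I will use the function
\[
h(v)=(v-1)e^v+1-\frac{c}{a},
\]
since $f'(v)=a\,h(v)/v^2$. Its derivative is $h'(v)=ve^v>0$ for $v>0$, so $h$ is strictly increasing. Moreover $h(0^+)=-c/a<0$ and $h\to\infty$ as $v\to\infty$. Thus $f$ is strictly decreasing and then strictly increasing, with a unique minimum at the root, which is equivalently the content of \cite[Lem.~3]{bjornson2015a}. Mapping this back through $P=BN_0(e^v-1)/(M\beta)$ yields \eqref{eq:optimal-ratio-2}.
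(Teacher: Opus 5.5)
Your proof is correct and follows the paper's route. The paper simply invokes \cite[Lem.~3]{bjornson2015a}; your substitution $v=\ln(1+MP\beta/(BN_0))$, the stationarity condition $(v-1)e^v=c/a-1$ with the principal-branch choice, and the monotonicity of $h$ together re-derive that cited lemma for this case, with only a cosmetic slip: after dividing by $B\log_2(e)$ the rate-dependent denominator term is $\eta v$, not $\eta\log_2(e)\,v$, and the factor $\frac{1}{\log_2 e}\cdot\log_2 e$ is vacuous, which is harmless since, as you note, that term only shifts $1/\mathrm{EE}$ by a constant.
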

\begin{IEEEproof}
The EE with respect to $P$ has a form that can be directly maximized by using \cite[Lem.~3]{bjornson2015a}.
\end{IEEEproof}

By rearranging \eqref{eq:optimal-ratio-2}, we can once again obtain an expression for the optimal ratio $P/B$. However, in this case, $v$ also depends on $B$, so the result is different from Theorem~\ref{maximum-EE-lemma}.

Next, we optimize $B$ while keeping other parameters fixed. This process can be interpreted as a carrier bandwidth optimization.

\begin{lemma} \label{maximum-EE-B}
The EE in \eqref{eq:EE} is a unimodal function of $B$ (for any fixed $P,M$) that is maximized at a unique $B$ obtained by numerically solving the equation 
\begin{align}\begin{split} \label{eq:numericalB} 
    &\left(\frac{B N_0}{M P \beta}\left( \kappa \mu +D_0 M  \kappa + P   \right)+\kappa \mu + D_0 M \kappa + P \right) \\
    &\times\ln\left(1+\frac{M P \beta}{B N_0}\right)= 
     M \kappa \nu B + \kappa \mu + D_0 M \kappa + P.\end{split}
\end{align}

\end{lemma}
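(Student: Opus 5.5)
The plan is to strip away the $\eta$ term first, then show that what remains is a concave function divided by an affine one, and finally differentiate to recover \eqref{eq:numericalB}.

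\textbf{Reduction.} For fixed $P$ and $M$, define
\[
a \triangleq P/\kappa + \mu + D_0 M, \qquad b \triangleq \nu M, \qquad c \triangleq \frac{MP\beta}{N_0}.
\]
Then $C(B) = B\log_2(1+c/B)$, and \eqref{eq:EE} becomes
\[
\mathrm{EE}(B) = \frac{C(B)}{a+bB+\eta C(B)} = \frac{1}{\eta + (a+bB)/C(B)}.
\]
Hence maximizing $\mathrm{EE}$ over $B>0$ is equivalent to maximizing $f(B) \triangleq C(B)/(a+bB)$. The map $t\mapsto 1/(\eta+1/t)$ is strictly increasing, so $\mathrm{EE}$ and $f$ share their level-set structure and their maximizer.

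\textbf{Unimodality.} I would first check that $C(B)$ is strictly increasing and strictly concave on $B>0$. It is the perspective of the strictly concave function $\log_2(1+c\,\cdot)$. Concretely, $C''(B) = -\frac{c^2}{B(B+c)^2\ln 2}<0$.

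A ratio of a positive concave function to a positive affine function is (strictly) pseudoconcave. This is the standard fractional-programming fact, also used implicitly in \cite[Lem.~3]{bjornson2015a}. It follows because the superlevel sets $\{B: C(B)-t(a+bB)\ge 0\}$ are convex for each $t$, and strict concavity of $C$ makes any stationary point the unique global maximizer.

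Existence of an interior maximizer follows from the limits:
\begin{itemize}
\item as $B\to 0^+$, $C(B)\to 0$, so $f\to 0$;
\item as $B\to\infty$, $C(B)\to c\log_2 e$ stays bounded while $a+bB\to\infty$, so again $f\to 0$.
\end{itemize}
Since $f>0$ in between, a stationary point exists and is unique. This gives unimodality of $f$, and hence of $\mathrm{EE}$.

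\textbf{Stationarity equation.} Set $x=c/B$. The first-order condition $C'(B)(a+bB) = bC(B)$, with
\[
C'(B)\ln 2 = \ln(1+x) - \frac{x}{1+x},
\]
becomes, after cancelling the $bB\ln(1+x)$ terms,
\[
a\ln(1+x) = \frac{x}{1+x}(a+bB).
\]
Multiplying by $\kappa(1+x)/x$, and using $\kappa a = \kappa\mu+\kappa D_0M+P$ together with $1/x = BN_0/(MP\beta)$, yields exactly \eqref{eq:numericalB}. The equation is transcendental in $B$ (it involves $B\ln(1+c/B)$ alongside affine terms), so it must be solved numerically, for instance by bisection, which is justified by unimodality.

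\textbf{Main obstacle.} The only delicate step is uniqueness of the root of \eqref{eq:numericalB}. If one prefers not to invoke pseudoconcavity, I would instead show directly that the left-hand side minus the right-hand side, $h(B)$, changes sign exactly once. One can verify that $a(1+1/x)\ln(1+x) - a$ grows like $a\ln(1+c/B)\,B/c$, which is sublinear and concave in $B$, while $bB$ is linear. Combined with $h>0$ near $B=0$ and $h<0$ for large $B$, this gives a single crossing.
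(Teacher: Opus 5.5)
Your proof is correct, but it establishes uniqueness by a different route than the paper. The paper uses the same existence argument (positivity of $\mathrm{EE}$ and vanishing limits at $0$ and $\infty$). It then never studies the shape of $\mathrm{EE}(B)$ itself and works only with \eqref{eq:numericalB}. In your notation, its left-hand side is $\kappa a(1+1/x)\ln(1+x)$. This tends to $\infty$ as $B\to 0^+$ and is nonincreasing, since its derivative is $\kappa a\bigl(\tfrac{1}{c}\ln(1+c/B)-\tfrac{1}{B}\bigr)\le 0$ by $\ln(1+y)\le y$. The right-hand side is strictly increasing and affine, so there is exactly one root.

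You instead obtain unimodality structurally. $C(B)$ is a strictly concave perspective, so $C/(a+bB)$ is strictly pseudoconcave, and $\mathrm{EE}$ is a monotone transform of it. What your route buys is a property of the function itself, independent of how the first-order condition is rearranged: every stationary point is the global maximizer. That is also the property the paper later invokes informally for Algorithm~1. The paper's route is more elementary, and it certifies directly that this particular transcendental equation has a single root, with a natural bracket for bisection.

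The one step you should spell out is strictness, because convexity of superlevel sets alone gives only quasiconcavity. Take a stationary point $B_0$ and set $t^\star=f(B_0)$. The strictly concave function $C(B)-t^\star(a+bB)$ has zero value and zero slope at $B_0$, so it is negative for every $B\neq B_0$. Hence $B_0$ is the unique maximizer, and two distinct stationary points are impossible.

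Your fallback in the final paragraph mischaracterizes the shape of the key term. The expression $(1+1/x)\ln(1+x)-1=(1+B/c)\ln(1+c/B)-1$ is strictly \emph{decreasing} and convex in $B$, falling from $\infty$ to $0$. Its second derivative is $c/(B^2(B+c))>0$. It is not growing and concave. You kept only the $(B/c)\ln(1+c/B)$ piece and dropped $\ln(1+c/B)-1$, which dominates near $B=0$. The single-crossing conclusion still holds, and more easily: a strictly decreasing left-hand side against a strictly increasing right-hand side crosses exactly once. That is precisely the paper's argument.
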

\begin{IEEEproof}
Since $\mathrm{EE}(B)>0$ for $B>0$ and $\lim_{B\to \infty}\mathrm{EE}(B)=\lim_{B\to 0^+}\mathrm{EE}(B)=0$ there exists a positive solution $B_\mathrm{opt}$ that maximizes $\mathrm{EE}(B)$. Furthermore, finding $B_\mathrm{opt}$ by setting $\frac{\partial \mathrm{EE}}{\partial B}=0$ leads to \eqref{eq:numericalB}. This equation has only one solution since its left-hand side goes to $\infty$ for small $B$ but is always decreasing as $B$ grows, and the right-hand side is a positive affine function of $B$. To show that the left-hand side is a monotonically decreasing function, let us take the derivative of \eqref{eq:numericalB} with respect to $B$ and obtain

\begin{align}
  \dfrac{\left({\kappa}{\mu}+D_0M{\kappa}+P\right)\left(BN_0\ln\left(1+\frac{MP{\beta}}{BN_0}\right)-MP{\beta}\right)}{MP{\beta}B}.  
\end{align}
The above function is always non-positive since $\ln(1+x)\leq x$ holds for $x>0$, i.e.,
\begin{align}
    B N_0\ln\left(1+\frac{MP{\beta}}{BN_0}\right)-MP{\beta}\leq 0.
\end{align}
This proves that $\mathrm{EE}(B)$ is a unimodal function of $B$ and that the solution $B_\mathrm{opt}$ can be obtained numerically (e.g., through a bisection search). No closed-form solution exists.
\end{IEEEproof}

The results of Lemma \ref{maximum-EE-P} and \ref{maximum-EE-B} are illustrated in Fig.~\ref{fig:MISOEE}. In this figure, we consider a fixed number of transmit antennas, $ M=20$, and plot the EE as a function of $P$ and $B$. The optimal $P$ for given values of $B,M$ is solved by Lemma \ref{maximum-EE-P} and shown by the black line. The optimal $B$ for given values of $P,M$ is solved by Lemma \ref{maximum-EE-B} and shown by the red line. Furthermore, for large values $P$ and $B$, the two lines converge to the optimal ratio as explained in Theorem \ref{maximum-EE-lemma}.

\begin{figure}[t!]
	\centering \vspace{-2mm}
	\includegraphics[width=\columnwidth]{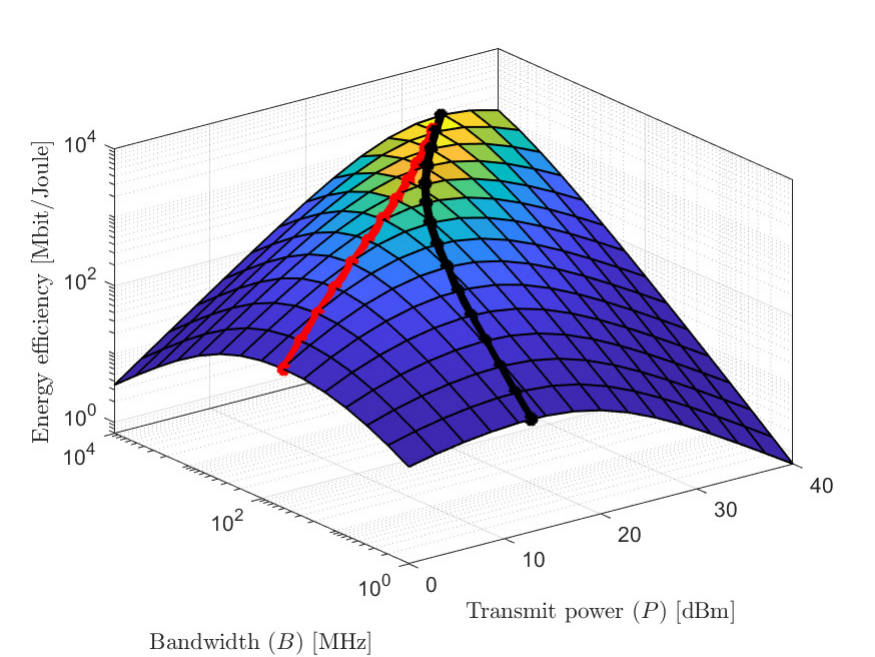}
	\caption{The EE is shown as a function of $P$ and $B$ when $M=20$. In black, we show the optimal $P$ for a given $B$ as in Lemma \ref{maximum-EE-P}. In red, we show the optimal $B$ for a given $P$ as in Lemma \ref{maximum-EE-B}. $P$ and $B$ converge to the optimal ratio $P/B$ given in Theorem 1 when they both take large values.}
	\label{fig:MISOEE} \vspace{-2mm}
\end{figure}

Finally, we optimize $M$ while other parameters are fixed.

\begin{lemma} \label{maximum-EE-M}
The EE in \eqref{eq:EE} for a given $B,P$ is maximized with respect to $M$ by
\begin{equation} \label{eq:optimal-M}
M = B N_0 \frac{e^{w}-1}{P \beta},
\end{equation}
where
\begin{equation}
w= W \left(\frac{P \beta(P/\kappa+\mu)}{B N_0 e(D_0 +\nu B)} -\frac{1}{e}\right)+1.
\end{equation}

\end{lemma}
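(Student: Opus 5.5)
The approach mirrors Lemma~\ref{maximum-EE-P}. We first show that the $M$-dependence of \eqref{eq:EE} has exactly the structure handled by \cite[Lem.~3]{bjornson2015a}, and then verify the resulting stationary point directly. With $B,P$ fixed, introduce the constants
\begin{equation*}
a \triangleq \frac{P\beta}{BN_0}, \qquad b \triangleq \frac{P}{\kappa}+\mu, \qquad c \triangleq D_0+\nu B,
\end{equation*}
all strictly positive. Then \eqref{eq:EE} reads $\mathrm{EE}(M) = \frac{B\log_2(1+aM)}{b+cM+\eta B\log_2(1+aM)}$.

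Dividing numerator and denominator by $B\log_2(1+aM)$ gives
\begin{equation*}
\mathrm{EE}(M) = \left(\frac{b+cM}{B\log_2(1+aM)}+\eta\right)^{-1}.
\end{equation*}
The $\eta$-term is therefore irrelevant for the maximizer. It suffices to maximize $f(M)=\ln(1+aM)/(b+cM)$ over $M>0$.

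The function $f$ is a ratio of a positive concave function and a positive affine function, hence strictly quasiconcave. Moreover, $f(M)\to 0$ both as $M\to 0^+$ and as $M\to\infty$. So a unique stationary point exists and is the global maximizer. Setting $f'(M)=0$ gives
\begin{equation*}
\frac{a(b+cM)}{1+aM} = c\ln(1+aM).
\end{equation*}
To confirm uniqueness directly, write $y=1+aM$. The condition becomes $cy\ln y - cy + c = ab$, and the left-hand side has derivative $c\ln y>0$ for $y>1$, so it is strictly increasing. It starts at $0$ when $y=1$ and grows without bound, so exactly one root $y>1$ exists.

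Next we solve for the root with the Lambert $W$ function. Substituting $y=e^{w}$ gives $(w-1)e^{w} = \frac{ab}{c}-1$, or equivalently
\begin{equation*}
(w-1)e^{w-1} = \frac{ab}{ce}-\frac{1}{e}.
\end{equation*}
Hence $w = W\!\left(\frac{ab}{ce}-\frac1e\right)+1$. Plugging in $a,b,c$ gives exactly the expression for $w$ in the lemma, and $M=(e^{w}-1)/a$ is \eqref{eq:optimal-M}.

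The main care point is the branch choice. Since $ab/c>0$, the argument of $W$ lies in $(-1/e,\infty)$. On that interval the principal branch gives $W>-1$, so $w>0$ and $M>0$. The lower branch would give $w<0$ and a negative $M$, so it is excluded. This matches the unique root $y>1$ identified above.

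As in Theorem~\ref{PoverM}, the result treats $M$ as continuous. For integer $M$, quasiconcavity means the optimum is one of the two integers neighbouring \eqref{eq:optimal-M}.
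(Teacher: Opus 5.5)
Your proof is correct and takes the same route as the paper. The paper's proof simply invokes \cite[Lem.~3]{bjornson2015a} for functions of the form $\log(1+aM)/(b+cM+\eta'\log(1+aM))$, and your argument reproduces that lemma's derivation explicitly: the removal of $\eta$, the uniqueness of the stationary point through $y=1+aM$, and the selection of the principal Lambert $W$ branch.
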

\begin{IEEEproof}
The EE can be directly maximized by using \cite[Lem.~3]{bjornson2015a}. 
\end{IEEEproof}

In Fig.~\ref{fig:MisoMstar}, we plot the optimal $M$ given by Lemma~\ref{maximum-EE-M} for varying values of $B$ and $P$. We observe that the optimal $M$ spans a wide range, depending on $P$ and $B$.

\begin{figure}[t!]
	\centering \vspace{-2mm}
	\includegraphics[width=\columnwidth]{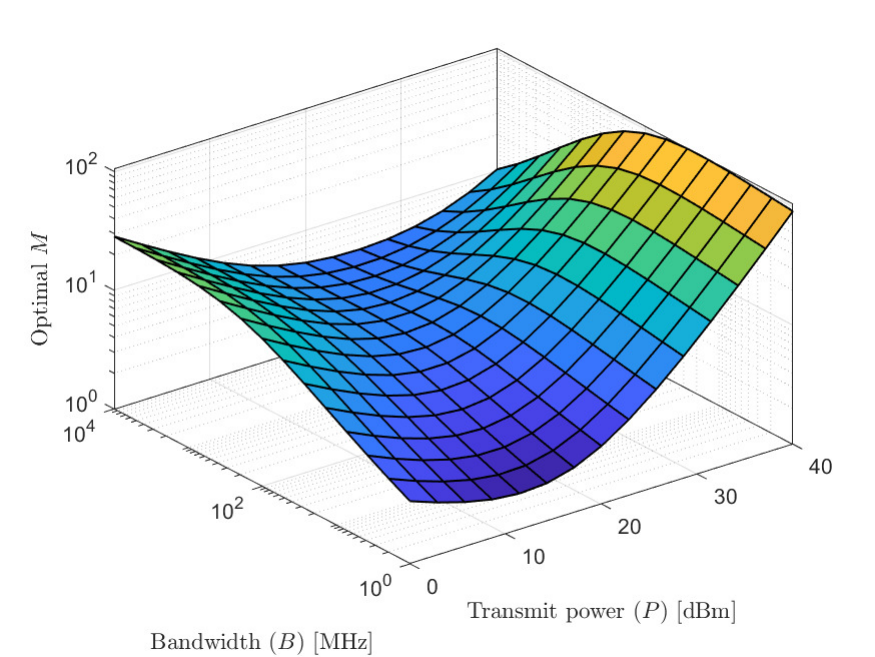}
	\caption{The optimal $M$ as given in Lemma~\ref{maximum-EE-M} as a function of the bandwidth and transmit power. The optimal value is shown on the vertical axis for different $P$ and $B$ values.}
	\label{fig:MisoMstar} \vspace{-2mm}
\end{figure}

\subsection{Computational Efficiency does not Affect the Solution}

As a corollary to the main results, it is interesting to note that the parameter $\eta$ (i.e., the PC constant proportional to the achieved rate) has no impact on the optimal parameters $(P_\mathrm{opt}, B_\mathrm{opt}, M_\mathrm{opt}) $.

\begin{corollary} \label{cor:ignore_eta}
    The optimal solution $(P_\mathrm{opt}, B_\mathrm{opt}, M_\mathrm{opt}) $ that maximizes \eqref{eq:EE} is independent of $\eta$.
\end{corollary}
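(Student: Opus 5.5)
The plan is to use the fact that the maximizer of a ratio of the form $C/(Q + \eta C)$, with $C$ the capacity and $Q$ the $\eta$-independent part of the PC, coincides with the maximizer of $C/Q$. Write
\begin{equation*}
Q(P,B,M) \triangleq P/\kappa + \mu + (D_0+\nu B)M ,
\end{equation*}
so that $\mathrm{EE} = C/(Q+\eta C)$, where neither $C$ nor $Q$ depends on $\eta$. Since $C>0$ for all feasible $(P,B,M)$ with $P,B,M>0$, we may take reciprocals:
\begin{equation*}
\frac{1}{\mathrm{EE}} = \frac{Q}{C} + \eta .
\end{equation*}
Maximizing $\mathrm{EE}$ over the feasible set is therefore equivalent to minimizing $1/\mathrm{EE}$. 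Because $\eta$ enters only as an additive constant, this is in turn equivalent to minimizing $Q/C$ over the same set. Equivalently, $\mathrm{EE} = f(C/Q)$ with $f(x) = x/(1+\eta x)$, which is strictly increasing for $x>0$ whenever $\eta\ge 0$.

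The key steps, in order, are as follows. First, rewrite \eqref{eq:EE} in the reciprocal form above. Second, note that the feasible set, given by $0<P\le P_\mathrm{max}$, $0<B\le B_\mathrm{max}$ and $1\le M\le M_\mathrm{max}$ (integer or relaxed), does not involve $\eta$. Third, conclude that the set of maximizers of $\mathrm{EE}$ equals the set of maximizers of $C/Q$, which is independent of $\eta$. This yields the claim for the global optimum $(P_\mathrm{opt},B_\mathrm{opt},M_\mathrm{opt})$, and in fact for the optimum over any subset of the variables.

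As a consistency check, I would verify that Lemmas~\ref{maximum-EE-P}--\ref{maximum-EE-M} already reflect this: the expressions for $v$ and $w$, and equation \eqref{eq:numericalB}, contain no $\eta$. The same holds for Theorems~\ref{maximum-EE-lemma} and \ref{thm:universal_snr}, where $u$, $u^\star$ and $M^\star$ are $\eta$-free and $\eta$ appears only in the attained value $\mathrm{EE}^\star$. These lemmas are unnecessary for the argument, however, because the monotone-transformation argument covers the joint problem directly.

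I do not expect any real obstacle. The only point needing care is to state explicitly that $C>0$ on the feasible set, which holds since $P,M,\beta>0$. This guarantees that the reciprocal transformation is valid and that the minimizers of $Q/C$ and the maximizers of $\mathrm{EE}$ coincide; in particular, if the maximizer is not unique, the whole maximizing set is still independent of $\eta$.
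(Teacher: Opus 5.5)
Your proof is correct. It uses the same decomposition as the paper, $\mathrm{EE} = f/(1+\eta f)$ with $f = C/Q$ the $\eta$-free efficiency, but it finishes differently.

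The paper differentiates and notes that $\mathrm{EE}' = f'/(1+\eta f)^2$ vanishes exactly when $f'$ does, so it only shows that the two objectives have the same stationary points. You instead use that $x \mapsto x/(1+\eta x)$ is strictly increasing, or equivalently $1/\mathrm{EE} = Q/C + \eta$. Hence the two objectives have the same set of maximizers on any $\eta$-independent feasible set.

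Your route is the more general one. It needs neither differentiability nor pseudo-concavity, and it directly covers the cases that problem \eqref{eq:optimization} actually produces:
\begin{itemize}
\item optima on the boundary $P=P_\mathrm{max}$ or $B=B_\mathrm{max}$ (Theorem~\ref{eitheror}), where the partial derivatives need not vanish;
\item the integer constraint on $M$;
\item non-unique maximizers.
\end{itemize}
The paper's stationarity argument is shorter and fits its derivative-based lemmas. To be rigorous for the constrained problem, however, it would need the extra remark that $\nabla \mathrm{EE}$ is a positive multiple of $\nabla f$, so that the KKT conditions coincide. Your monotonicity argument makes that remark unnecessary.
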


\begin{IEEEproof}
    We define
    \begin{equation} \label{eq:f}
        f=  \frac{B \log_2 \left( 1 + \frac{M P \beta}{B N_0} \right) }{P/\kappa + \mu  + ( D_0+\nu B )M } ,
    \end{equation}
    which is the EE with $\eta=0$.
    The $\mathrm{EE}$ in \eqref{eq:EE} can then be rewritten as
    \begin{equation}
        \mathrm{EE}=\frac{f}{1+\eta f}.
    \end{equation}
    Equating $\mathrm{EE}'=0$ (with respect to any variable) yields
    \begin{equation}
     \frac{f'(1+\eta f)-f(\eta f')}{(1+ \eta f)^2}=0,
    \end{equation}
    which has the only solution $f'=0$. This implies that $\mathrm{EE}$ is maximized precisely when $f$ is maximized, so the optimal parameters are the same.
\end{IEEEproof}

 The consequence is that optimizing $\mathrm{EE}$ in \eqref{eq:EE} can be facilitated by letting $\eta=0$. A similar observation was made in \cite{bjornson2015a} but for a different system model.

\subsection{Algorithm for Optimizing the EE}
\label{subsection:algorithm}

An algorithm that utilizes our previous results in Lemmas 1-3 and that converges to the optimal solution is provided in this section. In practical deployments, the BS is restricted by a maximum transmit power $P_\mathrm{max}$ to comply with safety regulations, reduce out-of-band emissions, and hardware limitations \cite{3gpp.38.104.r19}. The available spectrum is bounded by a maximum bandwidth $B_{\max}$ allocated by regulatory authorities. The number of antennas is physically limited to $M_{\max}$.

The goal is to solve the following joint EE maximization problem:

\begin{equation} \label{eq:optimization}
\begin{aligned}
 \underset{P,B,M}{\textrm{maximize}} \quad & \mathrm{EE}\\
\textrm{subject to} \quad & 0 < P \leq P_\mathrm{max},\\
& 0 < B \leq B_\mathrm{max}, \\
 \quad & M\in \{1,\ldots ,M_\mathrm{max}\},
\end{aligned}
\end{equation}
with the $\mathrm{EE}$ defined as in \eqref{eq:EE}.

The following result on whether $P$ or $B$ should be maximized is needed.%
\begin{theorem}\label{eitheror}
The constrained EE maximization problem in \eqref{eq:optimization} is solved at the boundary where $P=P_{\max}$ or $B=B_{\max}$. Specifically, the optimal solution lies on the $B=B_{\max}$ boundary if
\begin{equation}
    \frac{P_{\max}}{\kappa} \geq (D_0 + \nu B_{\max})\widehat{M},
\end{equation}
where $\widehat{M}$ is the optimal number of antennas at $(P_{\max}, B_{\max})$ given by Lemma \ref{maximum-EE-M}. Otherwise, the solution lies on the $P=P_{\max}$ boundary.
\end{theorem}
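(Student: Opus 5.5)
We plan to argue in two stages: first show that at least one of the box constraints on $P$ and $B$ must be active at the optimum, and then use the power-per-antenna relation of Theorem~\ref{PoverM} together with Lemmas~\ref{maximum-EE-P} and \ref{maximum-EE-B} to decide which one.

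\emph{Stage 1: the interior is never optimal.} By Corollary~\ref{cor:ignore_eta} we may set $\eta=0$ and work with $f$ in \eqref{eq:f}.

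Fix $M$ and consider the ray $(tP,tB)$ for $t>0$. Along this ray the SNR and the spectral efficiency $\log_2(1+\mathrm{SNR})$ are constant. Dividing the numerator and denominator by $t$, the EE becomes
\begin{equation*}
\frac{B\log_2(1+\mathrm{SNR})}{P/\kappa+\nu BM+(\mu+D_0M)/t}.
\end{equation*}
This expression is strictly increasing in $t$ because $\mu+D_0M>0$.

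Hence, for any feasible interior point with $P<P_{\max}$ and $B<B_{\max}$, we can scale both variables up until one of them reaches its bound. This strictly increases the EE. Since $M$ is discrete, the argument is applied for each fixed $M$ and therefore holds for the joint optimum. So the optimum lies on $P=P_{\max}$ or on $B=B_{\max}$.

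\emph{Stage 2: deciding which boundary.} Suppose the optimum has $B=B_{\max}$ and $P<P_{\max}$, so that $P$ is interior. For that fixed $B$, the pair $(P,M)$ must then satisfy Theorem~\ref{PoverM} (in the continuous relaxation of $M$): $P/\kappa=(D_0+\nu B_{\max})M$. Symmetrically, if $P=P_{\max}$ and $B<B_{\max}$, we aim to show that the stationarity condition in $B$ forces $P_{\max}/\kappa<(D_0+\nu B)M$.

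To compare at the corner $(P_{\max},B_{\max})$, we use $\widehat{M}$ from Lemma~\ref{maximum-EE-M}. Consider the function $\phi(P)=P/\kappa-(D_0+\nu B_{\max})M_{\mathrm{opt}}(P)$ along the edge $B=B_{\max}$, where $M_{\mathrm{opt}}(P)$ is the best response given by Lemma~\ref{maximum-EE-M}.

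If $\phi(P_{\max})\ge 0$, the power is already at least as large as the AM--GM balance point. By the AM--GM argument of Theorem~\ref{PoverM}, the optimal product $MP$ is then reachable with $P\le P_{\max}$, so the $P$-constraint does not bind and the optimum sits on $B=B_{\max}$. Otherwise the power is scarce relative to the antenna cost. Increasing $B$ would then only raise the $\nu BM$ term without allowing the balance to be restored, so the binding constraint is $P=P_{\max}$.

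To make this rigorous, we would differentiate $f$ with respect to $B$ at $(P_{\max},B_{\max},\widehat{M})$. Using the envelope theorem in $M$ and the first-order condition from Lemma~\ref{maximum-EE-M}, we expect the sign of $\partial f/\partial B$ to reduce to the sign of $P_{\max}/\kappa-(D_0+\nu B_{\max})\widehat{M}$. Combining this with the unimodality in $B$ from Lemma~\ref{maximum-EE-B}, a nonnegative sign means the optimal $B$ is at $B_{\max}$. A negative sign means the optimal $B$ is below $B_{\max}$, and then Stage 1 forces $P=P_{\max}$.

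\emph{Main obstacle.} The hard step is exactly this sign reduction. The derivative $\partial f/\partial B$ involves $\ln(1+\mathrm{SNR})-\mathrm{SNR}/(1+\mathrm{SNR})$ as well as the $\mu$ term. Showing that it collapses, after the $M$-optimality condition is substituted, to the clean comparison in the statement will require the Lambert-$W$ identity $(w-1)e^w=\cdot$ from Lemma~\ref{maximum-EE-M}, in the same way the cancellation was obtained before \eqref{eq:EE_max}. We would first attempt this algebraically. If exact collapse fails, we would fall back to a monotonicity argument: show that $\phi$ is increasing in $P$, and conclude the boundary classification from where $\phi$ changes sign.
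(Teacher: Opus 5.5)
\paragraph{Where the proposal breaks.} Your Stage~1 is the paper's dilution argument and is fine. The gap is the step you flag as the main obstacle: the sign of $\partial f/\partial B$ at the corner does \emph{not} reduce to the sign of $P_{\max}/\kappa-(D_0+\nu B_{\max})\widehat{M}$.

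Write $f=N/D$ with $N=B\log_2(1+\gamma)$, $\gamma=MP\beta/(BN_0)$, $D=P/\kappa+\mu+(D_0+\nu B)M$. Two structural facts hold:
\begin{itemize}
\item $N$ is homogeneous of degree one in $(P,B)$, so $B\partial_B N=N-P\partial_P N$;
\item $N$ depends on $(P,M)$ only through $MP$, so $P\partial_P N=M\partial_M N$.
\end{itemize}
The stationarity condition behind Lemma~\ref{maximum-EE-M} reads $M\partial_M N\cdot D=NM(D_0+\nu B)$. Substituting it gives, at $(P_{\max},B_{\max},\widehat{M})$,
\begin{align*}
B_{\max}\frac{\partial f}{\partial B}&=\frac{N}{D^2}\Big(\frac{P_{\max}}{\kappa}+\mu-\nu B_{\max}\widehat{M}\Big),\\
P_{\max}\frac{\partial f}{\partial P}&=\frac{N}{D^2}\Big((D_0+\nu B_{\max})\widehat{M}-\frac{P_{\max}}{\kappa}\Big).
\end{align*}
So the $B$-derivative compares $P_{\max}/\kappa+\mu$ with $\nu B_{\max}\widehat{M}$: there is no $D_0$ and an extra $\mu$. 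At the theorem's threshold it equals $N(\mu+D_0\widehat{M})/D^2>0$.

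By continuity there are therefore configurations with $\nu B_{\max}\widehat{M}-\mu\le P_{\max}/\kappa<(D_0+\nu B_{\max})\widehat{M}$. In this range the theorem's condition fails, yet $\partial f/\partial B>0$. Your chain ``negative sign $\Rightarrow B_{\mathrm{opt}}<B_{\max}\Rightarrow P=P_{\max}$'' never fires there, so the ``otherwise'' clause is left unproved. Your heuristic that increasing $B$ ``only raises the $\nu BM$ term'' is also contradicted in this range. The ``if'' direction would survive, since the theorem's condition implies the first line is positive.

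No Lambert-$W$ identity can repair this. Monotonicity of $\phi$ alone also does not connect the sign of $\phi$ to where the optimum lies.

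\paragraph{How to fix it.} The missing idea is to differentiate with respect to $P$ rather than $B$; this is what the paper does. Because $P$ and $M$ enter the rate only through $MP$, the $M$-stationarity cancels everything except $(D_0+\nu B_{\max})\widehat{M}-P_{\max}/\kappa$, as in the second line above. The paper reaches the same comparison through its identity for $\widehat{\gamma}$ and the monotonicity of $x^2+\mu x$; the direct cancellation is shorter.

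Your Stage~1 in infinitesimal form is
\[
P\partial_P f+B\partial_B f=\frac{N(\mu+D_0\widehat{M})}{D^2}>0 .
\]
\begin{itemize}
\item If $\partial f/\partial P\le 0$ at the corner, this identity forces $\partial f/\partial B>0$. Unimodality in $B$ (Lemma~\ref{maximum-EE-B}) then makes the corner the best point of the edge $P=P_{\max}$, so the optimum lies on $B=B_{\max}$.
\item If $\partial f/\partial P>0$, pseudo-concavity in $P$ makes the corner the best point of the edge $B=B_{\max}$, so the optimum lies on $P=P_{\max}$.
\end{itemize}
Like the paper, this classification treats $M$ as frozen at $\widehat{M}$.
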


\begin{IEEEproof}
For any fixed ratio $P/B$, scaling $P$ and $B$ jointly by a factor $c>1$ leaves the SNR unchanged and strictly increases the EE because the fixed power $\mu + D_0 M$ becomes diluted. Consequently, no interior optimum exists and the global maximum must lie on the boundary where either $P=P_{\max}$ or $B=B_{\max}$. Since the EE is jointly strictly pseudo-concave in $(P,B)$ \cite{ZapponeNowPublishers2015}, evaluating $\frac{\partial \mathrm{EE}}{\partial P} \leq 0$ at the corner $(P_{\max},B_{\max})$ indicates that decreasing $P$ increases the EE, placing the optimum on the $B=B_{\max}$ boundary. Otherwise, it lies on the $P=P_{\max}$ boundary. Ignoring $\eta$ via Corollary 1 and expanding $\frac{\partial \mathrm{EE}}{\partial P} \leq 0$ at $(P_{\max},B_{\max},\widehat{M})$ yields
\begin{equation} \label{eq:boundary_ineq}
\frac{\kappa \widehat{\gamma}}{P_{\max}} \big(\mu + \widehat{M}(D_0 + \nu B_{\max})\big) \leq (1+\widehat{\gamma})\ln(1+\widehat{\gamma}) - \widehat{\gamma},
\end{equation}
where we define $\widehat{\gamma} = \frac{\widehat{M} P_{\max} \beta}{B_{\max} N_0}$. Because $\widehat{M}$ optimizes the EE at this coordinate by Lemma \ref{maximum-EE-M}, it satisfies $\frac{\partial \mathrm{EE}}{\partial M} = 0$. Expanding this derivative and equating to zero yields the algebraic identity:
\begin{equation} \label{eq:M_identity}
(1+\widehat{\gamma})\ln(1+\widehat{\gamma}) - \widehat{\gamma} = \frac{\widehat{\gamma}}{\widehat{M}} \frac{P_{\max}/\kappa + \mu}{D_0 + \nu B_{\max}}.
\end{equation}
Substituting \eqref{eq:M_identity} into \eqref{eq:boundary_ineq}, dividing by $\widehat{\gamma}$, and multiplying by $\widehat{M} \frac{P_{\max}}{\kappa} (D_0 + \nu B_{\max})$ produces the inequality:
\begin{equation}
\big(\widehat{M}(D_0 + \nu B_{\max})\big)^2 + \mu \widehat{M}(D_0 + \nu B_{\max}) \leq \left(\frac{P_{\max}}{\kappa}\right)^2 + \mu \frac{P_{\max}}{\kappa}.
\end{equation}
Since $f(x) = x^2 + \mu x$ is monotonically increasing for $x > 0$, this inequality holds if and only if $\widehat{M}(D_0 + \nu B_{\max}) \leq P_{\max}/\kappa$ which is the condition stated in the theorem.
\end{IEEEproof}

The condition in this theorem compares the transceiver power consumption versus the radiated power consumption at $(P_\mathrm{max},B_\mathrm{max})$.

We propose Algorithm 1 to solve \eqref{eq:optimization}. The starting configuration $(P_\mathrm{max}, B_\mathrm{max}, \widehat{M})$ is principled: Theorem \ref{eitheror} guarantees the optimum lies on the boundary emanating from this corner, and the convergence tolerance $\delta$ is set proportional to the initial EE so the stopping criterion is scale-invariant. If the Theorem \ref{eitheror} inequality holds, the optimum lies on the $B=B_\mathrm{max}$ boundary, and we invoke Lemma \ref{maximum-EE-P} to find the optimal $P$. Otherwise, the optimum lies on the $P=P_\mathrm{max}$ boundary, and we invoke Lemma \ref{maximum-EE-B} to compute the optimal $B$. We then use Lemma \ref{maximum-EE-M} to update the number of antennas $M$ based on the obtained $P$ and $B$, capped at $M_\mathrm{max}$, and alternate this update with the boundary optimization until the EE increase between iterations falls below $\delta$. Because the number of antennas must be an integer, the final step considers the two closest integers to the converged $M$, recomputes the corresponding optimal $P$ and $B$, and selects the configuration with the highest EE. The objective is strictly pseudo-concave in each variable individually \cite{ZapponeNowPublishers2015} and bounded from above by \eqref{eq:EE_star_ultimate}, so it possesses a unique global maximum. Since each step of the alternating optimization strictly maximizes one dimension, the EE monotonically increases and convergence is assured. The updates of $P$ and $M$ are shown in Fig.~\ref{fig:algo1}.

\begin{algorithm}[b!]
\label{algo1}
\caption{Joint EE Maximization}
\begin{algorithmic}[1]
\renewcommand{\algorithmicrequire}{\textbf{Input:}}
\renewcommand{\algorithmicensure}{\textbf{Output:}}
\REQUIRE $\beta, N_0, \mu, \nu, D_0, \kappa, P_\mathrm{max}, B_\mathrm{max}, M_\mathrm{max}$
\ENSURE $P, B, M, \mathrm{EE}$
\STATE \textit{use Lemma \ref{maximum-EE-M}} to initialize $M=\widehat{M}$ at $(P_\mathrm{max}, B_\mathrm{max})$
\STATE \textit{set tolerance}: $\delta = 10^{-3}\, \mathrm{EE}(P_\mathrm{max}, B_\mathrm{max}, \widehat{M})$
\STATE initialize $\mathrm{EE}_\mathrm{prev} \gets 0$
\STATE initialize $\Delta \gets \infty$

\STATE \textit{evaluate boundary condition from Theorem \ref{eitheror}}:
\IF {$\widehat{M} < P_\mathrm{max}/[\kappa(D_0 + \nu B_\mathrm{max})]$}
    \STATE \textit{optimum is on the bandwidth boundary}: set $B=B_\mathrm{max}$
\ELSE
    \STATE \textit{optimum is on the power boundary}: set $P=P_\mathrm{max}$
\ENDIF

\WHILE{$\Delta \geq \delta$}
  \IF {$B == B_\mathrm{max}$}
      \STATE \textit{use Lemma \ref{maximum-EE-P}} to optimize $P(B_\mathrm{max}, M)$
  \ELSE
      \STATE \textit{use Lemma \ref{maximum-EE-B}} to optimize $B(P_\mathrm{max}, M)$
  \ENDIF
  \STATE \textit{use Lemma \ref{maximum-EE-M}} to update $M(P, B)$, capped at $M_\mathrm{max}$
  \STATE update $\Delta \gets \mathrm{EE}(P, B, M) - \mathrm{EE}_\mathrm{prev}$ and $\mathrm{EE}_\mathrm{prev} \gets \mathrm{EE}(P, B, M)$
\ENDWHILE

\STATE \textit{compare the closest integers to $M$}
\STATE update $(P,B)$ for $M=\lfloor M \rfloor$ and $M=\lceil M \rceil$ as in lines 12--16
\STATE \textit{identify the optimal solution as the one of those two candidate solutions that attains the highest EE}
\RETURN $P, B, M, \mathrm{EE}$
\end{algorithmic} 
\end{algorithm}

\begin{figure}[t!]
	\centering \vspace{-4mm}
	\includegraphics[width=\columnwidth]{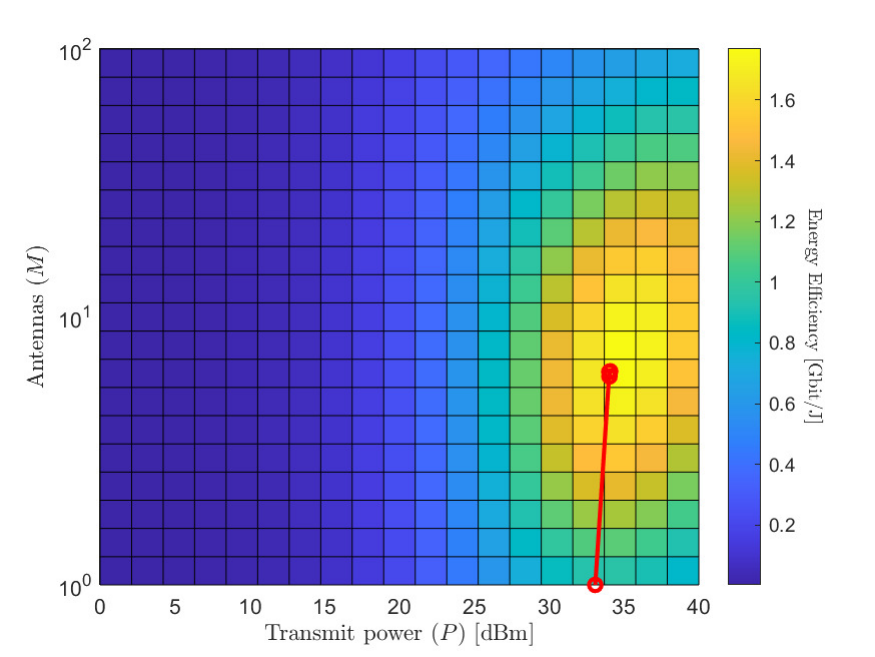}
	\caption{The EE is shown as a function of $P$ and $M$. It converges to the optimum point as detailed in Algorithm 1. Encircled in red we have the algorithm's updates of $P$ and $M$ and the corresponding $\mathrm{EE}$.}
	\label{fig:algo1} \vspace{-2mm}
\end{figure}

\section{Rate Constraints and Sleep Modes} \label{sec:lowrate}

The maximum EE is achieved at a particular data rate, which might not match the rate required by the application that generates that data. Hence, we must extend our framework to determine the most energy-efficient way to deliver the required QoS $R$. If the QoS requirement is low, then the BS might not need to transmit continuously. Instead, it can achieve an average rate of  $R$ by switching between operating at a higher rate and exploiting idle periods through advanced sleep modes. This strategy is known as ``rush-to-sleep" and allows the hardware to dynamically scale its PC based on real-time traffic demands. %
This section integrates three distinct advanced sleep modes into our optimization framework from Section \ref{subsection:algorithm} to derive novel optimal operating points for energy-efficient transmissions when the BS can exploit idle periods.

\subsection{Sleep Mode Power Consumption Model}

Our PC model in \eqref{eq:PC} is valid when the BS is actively transmitting data. When the BS is sleeping (or waiting to transmit), it consumes only the constant sleep power 
\begin{equation}
    P_{s}=\delta_s(\mu+D_0M)
\end{equation}
for some sleep level parameter $\delta_s\in[0,1]$ that depends on the hardware capabilities of the BS and the depth of the sleep \cite{debaillie2015flexible, 3gpp.38.864}. The deeper the sleep mode, the less power is consumed. The value of $P_{s}$ is lower than or equal to the PC of the BS in active mode since $\delta_s \leq 1$, with equality if and only if the BS is active without transmitting or processing. We furthermore assume that the transition between active mode and sleep mode is instantaneous and consumes no power. In this paper, we consider three distinct sleeping modes with varying values of $\delta_s$ as described below. These modes are chosen for their theoretical significance.\footnote{Note that 3GPP also specifies other values that might be used in this context \cite{3gpp.38.864}.} In practice, the possible choices of sleep modes depend heavily on the traffic (through wake-up times) and are restricted by latency and scheduling decisions in order to satisfy the QoS requirements. This is investigated further in Section \ref{sec:Latency}, while we consider all the following modes to be available in this section:

\begin{itemize}
    \item \textbf{Absolute sleep:} $P_{s}=0$ and $\delta_s=0$; i.e., the BS consumes no power when sleeping.
    \item \textbf{Deep sleep:} $P_{s}=\mu$ and $\delta_s= \frac{\mu}{\mu+D_0M}$; i.e., the BS only consumes its passive circuit power when sleeping.
    \item \textbf{Idle mode:} $P_{s}=\mu+D_0M$ and $\delta_s=1$; i.e., the BS consumes the passive circuit power needed to power itself and its transceivers.
\end{itemize}

Let $\alpha \in [0,1]$ denote the activity factor of the BS, defined as the proportion of time the BS is actively transmitting. Consequently, the BS remains in sleep mode for the remaining $(1-\alpha)$ of the time. The average $\mathrm{PC}$ becomes
\begin{align} \nonumber
    \mathrm{PC}=&\alpha\underbrace{\left(\frac{P}{\kappa} +\mu+ (D_0+\nu B)M +\eta  B \log_2\left(1+\frac{MP\beta}{BN_0}\right)\right)}_{P_a}\\
    &+(1-\alpha)\underbrace{\delta_s(\mu+D_0M)}_{P_s}. \label{eq:PaandPs}
\end{align}
The required rate $R$ uniquely determines the activity factor as 
\begin{equation} \label{eq:activityfactor}
    \alpha(R) = \frac{R}{B \log_2\left(1+\frac{MP\beta}{BN_0}\right)}.
\end{equation}

\subsection{Problem Formulation}

We want to minimize the PC while satisfying the rate constraint by solving
\begin{equation} \label{eq:PC_rateopt}
\begin{aligned}
 \underset{P,B,M,\alpha}{\textrm{minimize}} \quad &  \mathrm{PC}\\
\textrm{subject to} \quad & 0 < P \leq P_\mathrm{max}, \, 0 < B \leq B_\mathrm{max},
& \\ \quad & M \in \{1,\dots, M_\mathrm{max}\},\, \alpha \leq 1,
& \\ &\alpha B \log_2\left(1+\frac{MP\beta}{BN_0}\right)=R.
\end{aligned}
\end{equation}

Inserting $\alpha$ from \eqref{eq:activityfactor} into the PC defined in \eqref{eq:PaandPs} and rewriting, the problem \eqref{eq:PC_rateopt} becomes equivalent to maximizing the EE in \eqref{eq:1337} as shown at the top of the next page.

\begin{figure*}
\begin{equation} \label{eq:1337}
\begin{aligned}
 \underset{P,B,M}{\textrm{maximize}} \quad & \frac{B \log_2\left(1+\frac{MP\beta}{BN_0}\right)}{\frac{P}{\kappa}+(1-\delta_s)(\mu+ D_0M) +\nu BM+\left(\eta+\frac{\delta_s (\mu+D_0 M)}{R}\right)B\log_2\left(1+\frac{MP\beta}{B N_0}\right)} \\
\textrm{subject to} \quad & 0 < P \leq P_\mathrm{max}, \, 0 < B \leq B_\mathrm{max}, \,
M \in \{1,\dots, M_\mathrm{max}\}, \,  B\log_2\left(1+\frac{MP\beta}{B N_0}\right) \geq R.
\end{aligned}
\end{equation}
\hrulefill 
\end{figure*}

We now show that utilizing sleep modes increases the EE for all rates satisfying $R<R_\mathrm{opt}$, where
$R_\mathrm{opt}$ denotes the rate resulting from the optimal configuration obtained via Algorithm 1 for the EE maximization in \eqref{eq:optimization}.

\subsection{Absolute Sleep}
Under the assumption of absolute sleep, where $\delta_{s}=0$ (and equivalently $P_s=0$), and by invoking Corollary \ref{cor:ignore_eta} to neglect $\eta$, the EE objective function in \eqref{eq:1337} reduces to
\begin{equation} \label{eq:optimization4andhalf}
\mathrm{EE}_\mathrm{absolute}(P,B,M)=\frac{B \log_2\left(1+\frac{MP\beta}{BN_0}\right)}{\frac{P}{\kappa} + \mu + (D_0+\nu B)M}.%
\end{equation}
This is equivalent to the non-rate-constrained EE problem in \eqref{eq:EE}. Consequently, the optimal operating configuration, achieved EE, and active transmit rate are identical to the solutions obtained by optimizing \eqref{eq:EE}, implying that
\begin{align}&(P_\mathrm{absolute},B_\mathrm{absolute},M_\mathrm{absolute},\mathrm{EE}_\mathrm{absolute}) \nonumber \\ & \quad \quad =(P_\mathrm{opt},B_\mathrm{opt},M_\mathrm{opt},\mathrm{EE_\mathrm{opt}}),
\end{align}
which can be obtained by Algorithm 1. The rate during transmission is $R_\mathrm{absolute}=R_\mathrm{opt}$ and the average rate is $R=\alpha R_\mathrm{opt}$.

\subsection{Deep Sleep}
For the deep sleep regime characterized by $P_s = \mu$ (equivalently $\delta_s = \mu/(\mu + D_0 M)$), the fixed circuit power $\mu$ is paid uniformly across active and sleep periods and therefore drops out of the EE objective. Applying Corollary~\ref{cor:ignore_eta} to neglect $\eta$, the objective in \eqref{eq:1337} reduces to
\begin{equation} \label{eq:optimization9}
\mathrm{EE}_\mathrm{deep}(P, B, M) = \frac{B \log_2\left(1 + \frac{MP\beta}{BN_0}\right)}{P/\kappa + (D_0 + \nu B)M}.
\end{equation}
This objective retains $D_0$ but admits a closed-form solution.

\begin{theorem} \label{thm:deep_sleep}
For deep sleep, when the solution is obtained at $B = B_\mathrm{max}$ and $M \le M_\mathrm{max}$, the EE in \eqref{eq:optimization9} has a closed-form maximum. The optimal parameters are
\begin{align}
B_\mathrm{deep} &= B_\mathrm{max}, \\
M_\mathrm{deep} &= \sqrt{\frac{(e^{u^\star} - 1) N_0}{\kappa \beta (\nu + D_0/B_\mathrm{max})}}, \\
P_\mathrm{deep} &= \kappa(D_0 + \nu B_\mathrm{max}) M_\mathrm{deep},
\end{align}
where $u^\star \approx 1.5936$ is the constant from Theorem~\ref{thm:universal_snr}. The resulting operating SNR and active transmission rate are
\begin{align}
\mathrm{SNR}_\mathrm{deep} &= e^{u^\star} - 1, \\
R_\mathrm{deep} &= B_\mathrm{max}\, u^{\star} \log_2(e).
\end{align}
\end{theorem}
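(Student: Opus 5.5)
With $B=B_\mathrm{max}$ fixed, I will reduce \eqref{eq:optimization9} to exactly the structure already handled by Theorem~\ref{maximum-EE-lemma} and Theorem~\ref{thm:universal_snr}. The key observation is that once $\mu$ has dropped out, the constant $D_0$ can be absorbed into an effective per-sample coefficient. Define $\tilde\nu \triangleq \nu + D_0/B_\mathrm{max}$, so that $(D_0+\nu B_\mathrm{max})M = \tilde\nu B_\mathrm{max} M$. Dividing numerator and denominator of \eqref{eq:optimization9} by $B_\mathrm{max}$ and writing $z = P/B_\mathrm{max}$ gives
\begin{equation*}
\mathrm{EE}_\mathrm{deep} = \frac{\log_2\!\left(1+\frac{M\beta}{N_0}z\right)}{z/\kappa + \tilde\nu M},
\end{equation*}
which is \eqref{eq:EE-first-step-SISO-circuit-power2-z} with $\eta=0$ and $\nu$ replaced by $\tilde\nu$. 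Unlike Theorem~\ref{maximum-EE-lemma}, no term has to be neglected: the reduction is exact because $\mu$ is absent and $D_0$ has been absorbed.

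First, for fixed $M$, I will optimize over $z$ (equivalently over $P$, since $B$ is fixed) via \cite[Lem.~3]{bjornson2015a}. This yields $\mathrm{SNR}=e^{u}-1$, where $u$ solves $(u-1)e^{u} = \kappa M^2\beta\tilde\nu/N_0 - 1$. Following the cancellation that led to \eqref{eq:EE_max}, the resulting value is $\log_2(e)\big/\!\left(\frac{N_0}{\kappa\beta}\frac{e^u}{M}\right)$. Second, I will maximize over continuous $M$ by repeating the stationarity argument of Theorem~\ref{thm:universal_snr} verbatim with $A=\kappa\beta\tilde\nu/N_0$. This gives $ue^u = 2AM^2$, hence $u=u^\star$ and $A M_\mathrm{deep}^2 = e^{u^\star}-1$, which is the stated $M_\mathrm{deep}$. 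The activity factor and the $\eta$ term play no role, by Corollary~\ref{cor:ignore_eta}.

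Third, I will obtain $P_\mathrm{deep}$ from Theorem~\ref{PoverM}. Its AM--GM argument applies directly, since \eqref{eq:optimization9} depends on $(P,M)$ only through $MP$ in the numerator and through $P/\kappa+(D_0+\nu B)M$ in the denominator. As a consistency check, $\mathrm{SNR} = MP\beta/(B_\mathrm{max}N_0) = \kappa\tilde\nu\beta M^2/N_0 = e^{u^\star}-1$, matching the first step. The rate then follows as $R_\mathrm{deep} = B_\mathrm{max}\log_2(1+\mathrm{SNR}_\mathrm{deep}) = B_\mathrm{max}u^\star\log_2(e)$.

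The main obstacle is justifying that the stationary point in $M$ is the global maximum, and that fixing $B=B_\mathrm{max}$ is legitimate. For the first point, I would argue that $e^{u}/M\to\infty$ both as $M\to0$ and as $M\to\infty$, since $u$ grows like $\ln M^2$. Together with the uniqueness of the solution of $(u-2)e^{u-2}=-2e^{-2}$ on the principal branch, this shows the unique critical point is the minimizer of the denominator. For the second point, the theorem's hypothesis already assumes the optimum lies at $B=B_\mathrm{max}$. One can also note that the objective is nondecreasing in $B$ along the optimal $(P,M)$ scaling, because the optimal value is proportional to $\sqrt{1/\tilde\nu}$, which increases with $B$.
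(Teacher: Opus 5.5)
Your proof is correct, but it is organized differently from the paper's. The paper applies Theorem~\ref{PoverM} first, eliminating $P$ via $P=\kappa(D_0+\nu B_\mathrm{max})M$. This collapses \eqref{eq:optimization9} to the scalar function $B_\mathrm{max}\log_2(1+AM^2)/\bigl(2(D_0+\nu B_\mathrm{max})M\bigr)$. Its stationarity condition $2AM^2=(1+AM^2)\ln(1+AM^2)$ is solved directly by Lambert $W$ through $\xi=1+AM^2$, and only afterwards is $\xi$ recognized as $e^{u^\star}$.

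You instead absorb $D_0$ into $\tilde\nu=\nu+D_0/B_\mathrm{max}$. You then note that the fixed-$B$ deep-sleep objective is exactly the objective of Theorems~\ref{maximum-EE-lemma} and~\ref{thm:universal_snr} with $\eta=0$ and $\nu$ replaced by $\tilde\nu$, with nothing neglected. From there you rerun that nested optimization: $P$ for fixed $M$ by \cite[Lem.~3]{bjornson2015a}, then $M$ by implicit differentiation of $u(M)$. Theorem~\ref{PoverM} serves only as a consistency check.

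The paper's route is shorter and is plain one-variable calculus, with no implicit function to differentiate. Yours explains structurally why the same constant $u^\star$ reappears, which the paper only remarks on informally after the proof. It also shows that in deep sleep the universal operating point of Theorem~\ref{thm:universal_snr} is reached exactly (with $\nu\to\tilde\nu$), rather than only in the limit $\mu/B, D_0M/B\to 0$.

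Two further points in your favour make explicit what the paper leaves implicit, since the paper just sets the derivative to zero and takes $B=B_\mathrm{max}$ from the hypothesis and Theorem~\ref{eitheror}. Your global-optimality argument ($e^{u}/M\to\infty$ at both ends of $(0,\infty)$, together with a single critical point) is one. The other is your observation that the optimized EE scales as $1/\sqrt{\tilde\nu}$ and hence increases with $B$.

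One small precision: $(u-2)e^{u-2}=-2e^{-2}$ has a second real root $u=0$, from the $W_{-1}$ branch since $W_{-1}(-2e^{-2})=-2$. It should be discarded not by appeal to ``the principal branch'' but because $u(M)$ is positive and strictly increasing on $M>0$, so $u=0$ corresponds only to $M=0$.
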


\begin{IEEEproof}
By Theorem~\ref{eitheror}, $B_\mathrm{deep} = B_\mathrm{max}$ ensures the power constraint is satisfied. By Theorem~\ref{PoverM}, $P = \kappa(D_0 + \nu B)M$ at the optimum. Substituting this into \eqref{eq:optimization9} reduces the objective to a function of $M$ alone:
\begin{equation}
\mathrm{EE}_\mathrm{deep}(M) = \frac{B_\mathrm{max} \log_2\left(1 + \frac{\kappa \beta (\nu + D_0/B_\mathrm{max})}{N_0} M^2 \right)}{2(D_0 + \nu B_\mathrm{max}) M}.
\end{equation}
Setting $\mathrm{EE}'_\mathrm{deep}(M) = 0$ yields
\begin{equation}
A M^2_\mathrm{deep} - \tfrac{1}{2}(1 + A M^2_\mathrm{deep})\ln(1 + A M^2_\mathrm{deep}) = 0,
\end{equation}
where $A \triangleq \kappa \beta (\nu + D_0/B_\mathrm{max})/N_0$. With $\xi = 1 + A M^2$ this becomes $(-2/\xi)e^{-2/\xi} = -2e^{-2}$, which the principal Lambert $W$ branch solves as $\xi = -2/W(-2e^{-2}) \approx 4.9216$. Noting that $\xi = e^{u^\star}$ with $u^\star = 2 + W(-2e^{-2}) \approx 1.5936$ from Theorem~\ref{thm:universal_snr}, we substitute back to obtain $M_\mathrm{deep} = \sqrt{(e^{u^\star} - 1)/A}$, and $P_\mathrm{deep} = \kappa(D_0 + \nu B_\mathrm{max}) M_\mathrm{deep}$ then follows from Theorem~\ref{PoverM}. The SNR follows from its definition as $\mathrm{SNR}_\mathrm{deep} = A M^2_\mathrm{deep} = e^{u^\star} - 1$, and the active rate becomes $R_\mathrm{deep} = B_\mathrm{max}\log_2(e^{u^\star}) = B_\mathrm{max}\, u^\star \log_2(e)$.
\end{IEEEproof}

The fact that Theorem~\ref{thm:deep_sleep} yields the same numerical constant $u^\star$ as Theorem~\ref{thm:universal_snr} is not a coincidence. The universal-SNR result of Theorem~\ref{thm:universal_snr} required $\mu$ and $D_0$ to be negligible. Theorem~\ref{thm:deep_sleep} admits a closed-form even with $D_0 > 0$ and $\mu > 0$. The convergence of the two theorems on the same optimal SNR shows that $5.93$ dB is a robust EE operating point.%

This configuration is reached by the rush-to-sleep strategy: transmitting briefly at $R_\mathrm{deep}$ and sleeping the rest of the time. A BS that can fully power down avoids paying $\mu$ during idle periods and therefore reaches higher EE for a given rate $R$. Deeper sleep capability translates directly into higher achievable EE. Under absolute sleep, the universal operating point is reached only asymptotically as $\mu/B \to 0$; for finite $B_\mathrm{max}$, the absolute-sleep optimum sits at a strictly higher SNR, since the additional fixed cost in \eqref{eq:optimization4andhalf} shifts the optimal balance toward larger transmit power.

\subsection{Idle Mode}

For idle mode, $P_s = \mu + D_0 M$ and $\delta_s = 1$. Applying Corollary \ref{cor:ignore_eta} to neglect $\eta$ reduces the EE objective in \eqref{eq:1337} to
\begin{equation} \label{eq:EEmicro}
    \mathrm{EE}_\mathrm{idle} = \frac{B \log_2\left(1+\frac{MP\beta}{BN_0}\right)}{\frac{P}{\kappa} + \nu B M + \frac{D_0 M}{R} B \log_2\left(1+\frac{MP\beta}{BN_0}\right)}.
\end{equation}
Unlike absolute and deep sleep, the rate $R$ now appears explicitly in the denominator through the term $D_0 M / R$, so the optimal hardware configuration depends on the rate requirement. For any given $M$, the optimal $P$ and $B$ are still obtained from Lemmas \ref{maximum-EE-P} and \ref{maximum-EE-B}, but $M$ itself must be found numerically. A simple sweep over $M = 1, \dots, M_\mathrm{max}$ with $P$ and $B$ optimized at each step suffices: the numerator grows logarithmically in $M$ while the denominator grows at least linearly, so the EE is unimodal in $M$ and the optimum is identified at the point just before EE degradation. At low rates, the $D_0 M / R$ term dominates the denominator, which favors using fewer antennas than deep sleep would.

\begin{figure}[t!]
    \centering
    \subfloat[EE of sleep modes and sustained transmission for different rate requirements.]{
        \includegraphics[width=\columnwidth]{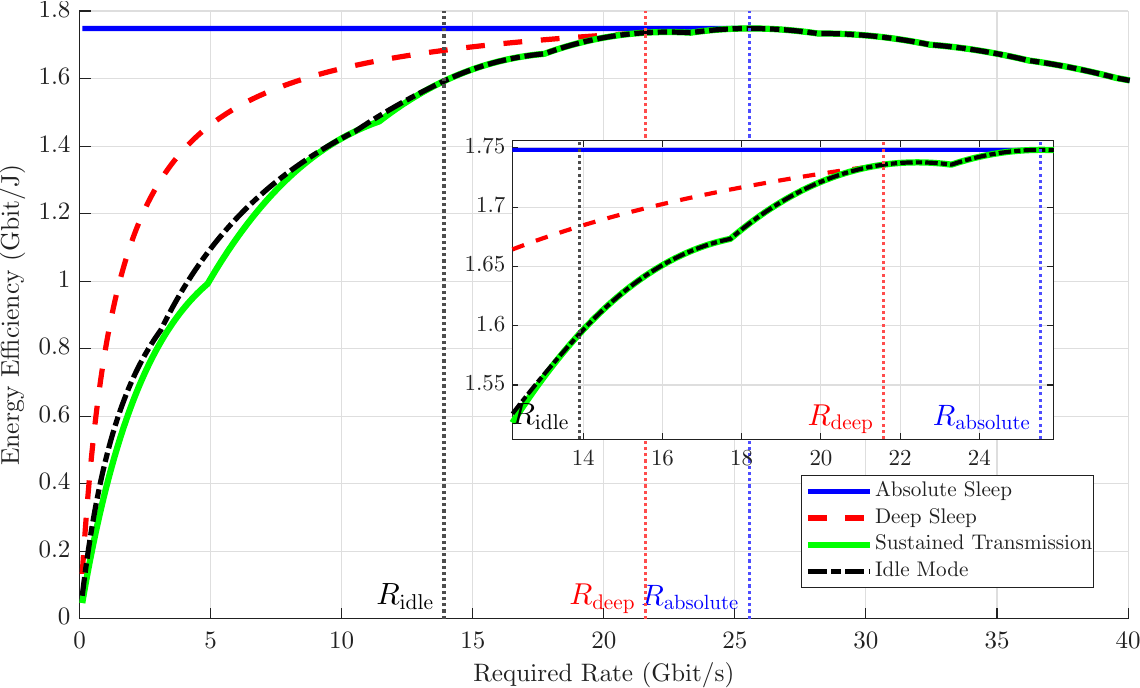}
        \label{fig:sleepEEvsR}
    }
        \\ \vspace{-2mm} %
    \subfloat[Operating SNR of sleep modes and sustained transmission for different rate requirements.]{
        \includegraphics[width=\columnwidth]{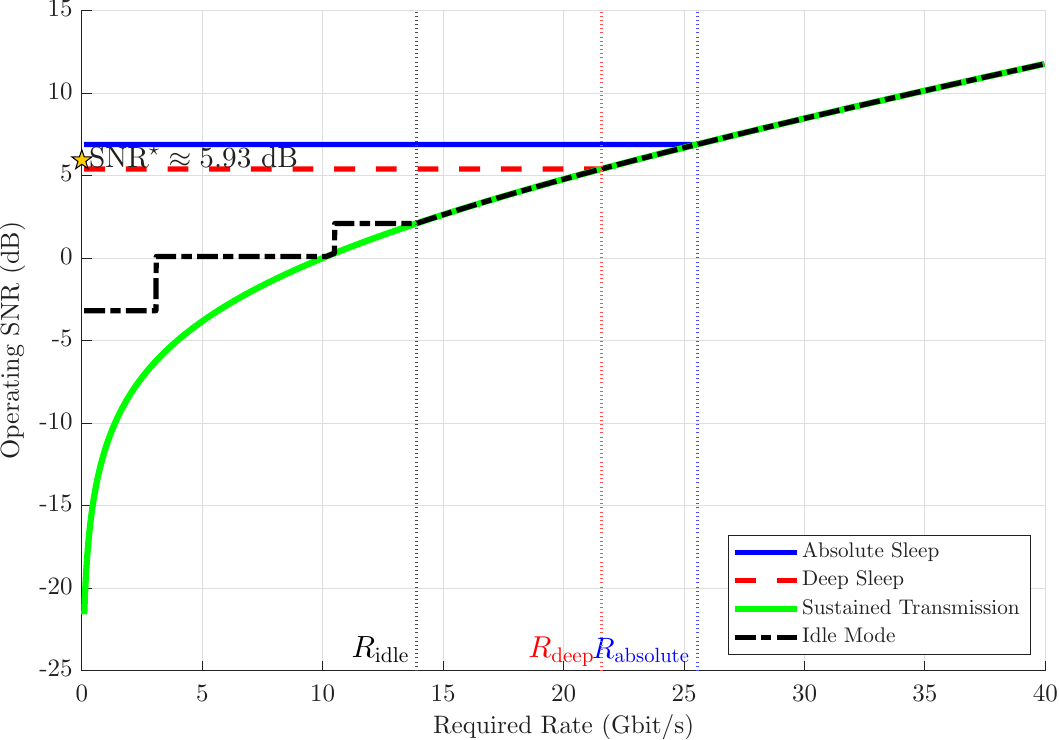}
        \label{fig:sleepSNRvsR}
    }
    \\ \vspace{-2mm} %
    \subfloat[Optimal activity factors $\alpha$ of sleep modes for different rate requirements.]{
        \includegraphics[width=\columnwidth]{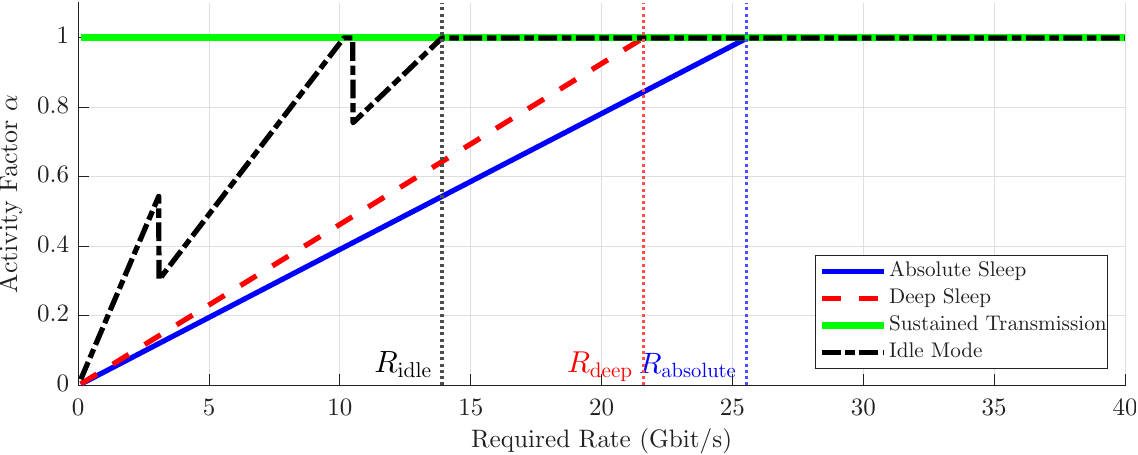}
        \label{fig:sleepALPHAvsR}
    }
    \caption{BS EE (a), operating SNR (b) and activity factor (c) across varying data rate requirements.}
    \label{fig:combined_performance}
    \vspace{-4mm}
\end{figure}

Fig.~\ref{fig:sleepEEvsR} plots the EE for all three sleep modes as a function of the required rate $R$. We also include in green a ``sustained transmission'' baseline: the configuration the BS uses when $R \ge R_\mathrm{opt}$ and sleep is no longer feasible, derived in Section~\ref{sec:highrate} below. For idle mode, sustained transmission is utilized for rates exceeding $R_\mathrm{idle}$, which is evident as their respective curves coincide. Similarly, for deep and absolute sleep, the BS transitions entirely to sustained transmission for rates greater than $R_\mathrm{deep}$ and $R_\mathrm{absolute}$, respectively. These exact transition points and the merging of the curves are clearly visible in the zoomed-in inset.

The corresponding operating SNRs are shown in Fig.~\ref{fig:sleepSNRvsR}. Deep sleep operates at the universal $\mathrm{SNR}^\star \approx 5.93\,$dB from Theorem~\ref{thm:universal_snr} (yellow star, observed here at $5.71\,$dB due to integer quantization), while absolute sleep operates at a strictly higher SNR because of the additional fixed cost $\mu$ in \eqref{eq:optimization4andhalf}. Once the rate exceeds the threshold of a given sleep mode, that mode's curve aligns with the sustained-transmission curve.%

The corresponding activity factor $\alpha$ is shown in Fig.~\ref{fig:sleepALPHAvsR}. The sawtooth pattern observed in $\alpha$ for idle mode arises directly from the rate-dependent integer optimization of the active antennas, $M$. Whenever $R$ crosses a threshold that justifies activating an additional antenna, the active transmission rate jumps upward. Because $\alpha$ is the ratio of the required rate to the active rate, the activity factor drops sharply in response. Between these thresholds, $M$ remains fixed and $\alpha$ climbs steadily until the next activation. Absolute and deep sleep do not exhibit this pattern, as their optimal $M$ remains constant. When the activity factor of any sleep mode reaches one, the BS defaults to sustained transmission and all curves achieve identical EE.

\begin{figure}[t!]
	\centering \vspace{-0mm}
	\includegraphics[width=0.9\columnwidth]{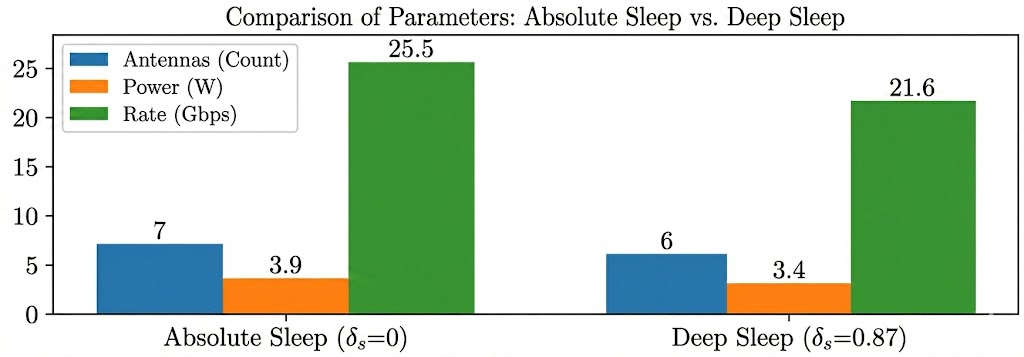}
	\caption{The optimal number of antennas, transmit power and  rate which maximizes the EE in absolute and deep sleep modes.} %
	\label{fig:MPRsleep} \vspace{-2mm}
\end{figure}

The corresponding hardware configurations for absolute and deep sleep are compared in Fig.~\ref{fig:MPRsleep}: absolute sleep transmits at higher $P$, with more antennas, and at a higher active rate, consistent with its higher operating SNR in Fig.~\ref{fig:sleepSNRvsR}. Across all three modes, EE is maximized by the rush-to-sleep strategy of transmitting briefly at the optimal active configuration and sleeping the rest of the time. All three sleep-capable modes outperform a BS that can only use sustained transmission.

\subsection{Sustained Transmission} \label{sec:highrate}
The sleep-mode strategies above all rely on the activity factor $\alpha$ being strictly less than one, which requires the required rate being below the active rate of the chosen sleep mode. As $R$ grows, $\alpha$ approaches one and the benefit of any sleep mode vanishes. For rates $R \ge R_\mathrm{opt}$, sleep is no longer feasible and the BS must transmit continuously at the required rate. The EE problem in \eqref{eq:1337} then reduces to minimizing the active power consumption subject to the rate constraint:
\begin{equation} 
\begin{aligned} 
\underset{P,B,M}{\textrm{minimize}} \quad & \frac{P}{\kappa} + \mu + (D_0+\nu B)M + \eta R\\
\textrm{subject to}
 \quad & B \log_2\left(1+\frac{MP\beta}{BN_0}\right) =R, \\
\quad & 0 < P \leq P_\mathrm{max}, \, 0 < B \leq B_\mathrm{max}, \\
 \quad & M\in \{1,\ldots ,M_\mathrm{max}\}. \label{eq:highrateoriginal}
\end{aligned}
\end{equation}
Provided that the optimal $P$ and $M$ are below their respective upper bounds, Theorem~\ref{PoverM} dictates that $P = \kappa(D_0 + \nu B)M$. Substituting this relationship into the rate constraint of \eqref{eq:highrateoriginal} and fixing the bandwidth at its maximum limit $B = B_\mathrm{max}$ yields the closed-form sustained transmission solution
\begin{align}
M_\mathrm{sust} &= \sqrt{\frac{(2^{R/B_\mathrm{max}} - 1)N_0}{\kappa\beta(\nu + D_0/B_\mathrm{max})}}, \\
P_\mathrm{sust} &= \kappa(D_0 + \nu B_\mathrm{max}) M_\mathrm{sust}.
\end{align}
The corresponding SNR is $\mathrm{SNR}_\mathrm{sust} = 2^{R/B_\mathrm{max}} - 1$, which exceeds $\mathrm{SNR}^\star=e^{u^\star} - 1$ whenever the rate constraint forces $R > R_\mathrm{deep}$. The resulting EE and SNR are plotted in green in Fig.~\ref{fig:sleepEEvsR} and \ref{fig:sleepSNRvsR}. The exponential dependence on $R$ in $P_\mathrm{sust}$ produces a drop in EE as the rate requirement grows beyond the active rate of the EE-optimal sleep configuration, reflecting the cost of pushing the system above its EE-optimal operating point. 

\section{Latency Constraints}\label{sec:Latency}

In practical wireless networks, downlink traffic consists of bursty sessions with a wide range of payload sizes and latency requirements. The strategy for managing the BS PC is therefore dictated by the traffic characteristics and their associated QoS constraints. This section extends our framework to account for state transition delays and shows how the sleep modes of Section \ref{sec:lowrate} can be combined with the optimization of Section \ref{sec:VarOpt} to minimize EC under deadline constraints.

\subsection{Problem Formulation} \label{subsec:latency_problem}

Adopting the PC model and sleep modes from Section \ref{sec:lowrate}, we extend our analysis to account for state transition delays. Let $\tau_k \geq 0$ denote the mode-specific wake-up transition time for sleep mode $k \in \{\text{absolute, deep, idle}\}$. Based on the 3GPP specifications \cite{3gpp.38.864}, we define $\tau_\mathrm{absolute} = 50$\,ms for absolute sleep, $\tau_\mathrm{deep} = 6$\,ms for deep sleep, and an instantaneous transition $\tau_\mathrm{idle} = 0$\,ms for idle mode. Transitioning into a sleep state incurs no energy cost and is assumed to be instantaneous \cite{kundu2025toward}, but it could also be considered part of the transition time.

Two distinct timing constraints arise in practice. The transmission deadline $T$ is the latest moment by which the data must be delivered (a QoS constraint set by the application). The service interval $U \geq T$ is the duration before the BS must be active again for the next scheduled transmission. When $U = T$, the BS must be active again the moment the deadline expires; when $U > T$, the BS can finish transmitting early, enter sleep, and transition back to active within the remaining time. The BS can also choose to finish the transmission before the deadline to extend the sleep period.

Consider a BS starting in an active state. It must transmit a data packet of $L$ bits within the time window $T$ and be ready for the next service after some specified time $U \geq T$. The problem of minimizing the total EC becomes:
\begin{equation}
\begin{aligned}
& \underset{P, B, M, t_{\text{tx}}, k}{\text{minimize}}
& & t_{\text{tx}} \left( \frac{P}{\kappa} + \mu + (D_0 + \nu B)M \right) + (U - t_{\text{tx}}) P_{s,k} \\
& \text{subject to}
& & t_{\text{tx}} B \log_2 \left( 1 + \frac{MP\beta}{BN_0} \right) = L, \\
& & & 0 < t_{\text{tx}} \le \min(T, U - \tau_k), \quad 0 < P \le P_{\text{max}}, \\
& & & 0 < B \le B_{\text{max}}, \quad M\in \{1,\ldots ,M_\mathrm{max} \}, \\
& & & k \in \{\text{absolute, deep, idle}\}.
\end{aligned}
\label{eq:latency_optimization}
\end{equation}

The transition time and transmission deadline restrict the maximum transmit time to $t_{\text{tx}} \le \min(T, U - \tau_k)$. Normalizing by $U$, the required average rate is $R = L/U$, and the activity factor is $\alpha = t_{\text{tx}}/U$. The upper bound on the viable activity factor for mode $k$ becomes
\begin{equation}
\alpha_{\mathrm{max},k} \triangleq \min\left(\frac{T}{U}, 1 - \frac{\tau_k}{U}\right).
\end{equation}
This bound reflects two competing physical deadlines: the BS must finish transmitting by time $T$, and it must leave $\tau_k$ seconds to wake up before the time $U$. The BS is constrained by whichever bottleneck occurs first, and actively reaches this upper bound when the latency constraints force the BS to transmit at the higher rate $R/\alpha_{\mathrm{max},k}$ rather than its optimal energy-efficient rate $R_{\rm opt}$.
This allows \eqref{eq:latency_optimization} to be written in a structure similar to the rate-constrained problem in \eqref{eq:1337}---the only difference being the upper bound on $\alpha$:
\begin{equation}
\begin{aligned}
& \underset{P, B, M, \alpha,k}{\text{minimize}}
& & \alpha {\left( \frac{P}{\kappa} + \mu + (D_0 + \nu B)M \right)} + (1 - \alpha) P_{s,k} \\
& \text{subject to}
& & \alpha B \log_2 \left( 1 + \frac{MP\beta}{BN_0} \right) = R, \\
& & & 0 < \alpha \le \alpha_{\text{max},k}, \quad 0 < P \le P_{\text{max}}, \\
& & & 0 < B \le B_{\text{max}}, \quad M\in \{1,\ldots ,M_\mathrm{max}\}, \\
& & & k \in \{\text{absolute, deep, idle}\}.
\end{aligned}
\label{eq:unified_optimization_final}
\end{equation}
Substituting $\alpha$ into the objective function and rewriting, \eqref{eq:unified_optimization_final} becomes equivalent to the EE-maximization problem \eqref{eq:long2} shown at the top of the next page.

\begin{figure*}[!t]
\begin{equation} \label{eq:long2}
\begin{aligned}
\underset{P,B,M,k}{\textrm{maximize}} \quad & \frac{B \log_2\left(1+\frac{MP\beta}{BN_0}\right)}{\frac{P}{\kappa} + \mu + (D_0 + \nu B)M - P_{s,k} + \frac{P_{s,k}}{R} B\log_2\left(1+\frac{MP\beta}{B N_0}\right)} \\
\textrm{subject to} \quad & 0 < P \leq P_\mathrm{max}, \, 0 < B \leq B_\mathrm{max}, \,
M \in \{1,\dots, M_\mathrm{max}\}, \\
& B\log_2\left(1+\frac{MP\beta}{B N_0}\right) \geq \frac{R}{\alpha_{\mathrm{max},k}},
\quad k \in \{\text{absolute, deep, idle}\}.
\end{aligned}
\end{equation}
\hrulefill
\end{figure*}

We propose Algorithm 2 to solve this problem by evaluating the optimal hardware configuration for each sleep mode $k$ in turn. A mode is viable only if $\alpha_{\mathrm{max},k} > 0$. For each viable mode, Algorithm 2 first checks whether the activity factor required to operate at the energy-efficient rate $R_\mathrm{opt}$ from Algorithm 1 fits within $\alpha_{\mathrm{max},k}$. If it does, the transition delay is absorbed into the idle period and the BS uses the unconstrained optimal configuration. If not, the BS is forced to transmit at the higher sustained rate $R / \alpha_{\mathrm{max},k}$ during the entire available active window, with parameters obtained as in Section \ref{sec:highrate}. The mode achieving the highest EE is selected. The cases exhaust the per-sleep-mode optimum, and the comparison across modes makes Algorithm 2 globally EE-optimal \eqref{eq:unified_optimization_final}.

\begin{algorithm}[t]
\label{alg:sleep_selection}
\caption{Sleep Mode and Hardware Configuration under Latency Constraints}
\begin{algorithmic}[1]
\renewcommand{\algorithmicrequire}{\textbf{Input:}}
\renewcommand{\algorithmicensure}{\textbf{Output:}}
\REQUIRE $L, T, U$, sleep modes $k \in \mathcal{K}$, transition times $\tau_k$, sleep powers $P_{s,k}$, $\beta, N_0,\mu, \nu,  D_0, \kappa, P_\mathrm{max}, B_\mathrm{max}, M_\mathrm{max}$
\ENSURE  $k^\star, P^\star, B^\star, M^\star, \mathrm{EE}^\star$
\\ \textit{Initialization}: $\mathrm{EE}^\star \gets 0$, $R = L/U$

\FOR {each sleep mode $k \in \mathcal{K}$}
    \STATE \textit{check viability}: calculate $\alpha_{\mathrm{max},k} = \min\left(\frac{T}{U}, 1 - \frac{\tau_k}{U}\right)$
    \IF {$\alpha_{\mathrm{max},k} \le 0$}
        \STATE \textbf{continue}
    \ENDIF
    
    \STATE \textit{evaluate standard operation}: run Algorithm 1 to find optimal configuration $(P_\mathrm{opt}, B_\mathrm{opt}, M_\mathrm{opt})$, $R_\mathrm{opt}$,  $P_a$
    \STATE Calculate $\alpha = R / R_\mathrm{opt}$
    
    \IF {$\alpha \le \alpha_{\mathrm{max},k}$}
        \STATE Calculate $\mathrm{EE}_k = \frac{R}{\alpha P_a + (1 - \alpha) P_{s,k}}$
        \STATE  $(P_k, B_k, M_k) \gets (P_\mathrm{opt}, B_\mathrm{opt}, M_\mathrm{opt})$
    \ELSE
        \STATE calculate $(P_\mathrm{sust}, B_\mathrm{sust}, M_\mathrm{sust})$ and $P_{a}$ for $R_\mathrm{sust} = R / \alpha_{\mathrm{max},k}$ as in Sec.~\ref{sec:highrate}
        \STATE Calculate $\mathrm{EE}_k = \frac{R}{\alpha_{\mathrm{max},k} P_{a} + (1 - \alpha_{\mathrm{max},k}) P_{s,k}}$
        \STATE $(P_k, B_k, M_k) \gets (P_\mathrm{sust}, B_\mathrm{sust}, M_\mathrm{sust})$
    \ENDIF
    
    \IF {$\mathrm{EE}_k > \mathrm{EE}^\star$}
        \STATE \textit{update global maximum}: $\mathrm{EE}^\star \gets \mathrm{EE}_k$
        \STATE $(k^\star, P^\star, B^\star, M^\star) \gets (k, P_k, B_k, M_k)$
    \ENDIF
\ENDFOR

\RETURN $k^\star, P^\star, B^\star, M^\star, \mathrm{EE}^\star$
\end{algorithmic} 
\end{algorithm}

Fig.~\ref{fig:heatmap} illustrates the optimal sleep mode and EE as a function of the transmission window $T$ from $10$\,ms up to $220$\,ms and payload sizes $L$ from $0.5$\,Gbits to $10.5$\,Gbits. In the figure, we let $T=U$  i.e., the next service interval begins when the transmission deadline expires (relaxing to $U > T$ trivially favors absolute sleep across nearly all $(L, T)$ combinations). The color gradient represents the achieved EE; with lighter regions indicating higher EE. The annotations denote the optimal sleep mode: absolute (``a''), deep (``d''), and idle mode (``i''). The letter (``s'') indicates that the BS does not sleep---but transmits during the full time window at a sustained rate higher than $R_\mathrm{opt}$. Black regions indicate that the payload cannot be delivered within the specified time window given the hardware constraints, i.e., the optimization problem is infeasible. As is evident, increasing $T$ and decreasing $L$ favor absolute and deep sleep modes. Notably, the EE in Fig.~\ref{fig:heatmap} is approximately constant along lines of constant rate $R = L/T$, indicating that the rate requirement is the primary determinant of both the achievable EE and the optimal sleep mode.

\begin{figure}[t!]
	\centering \vspace{-0mm}
	\includegraphics[width=\columnwidth]{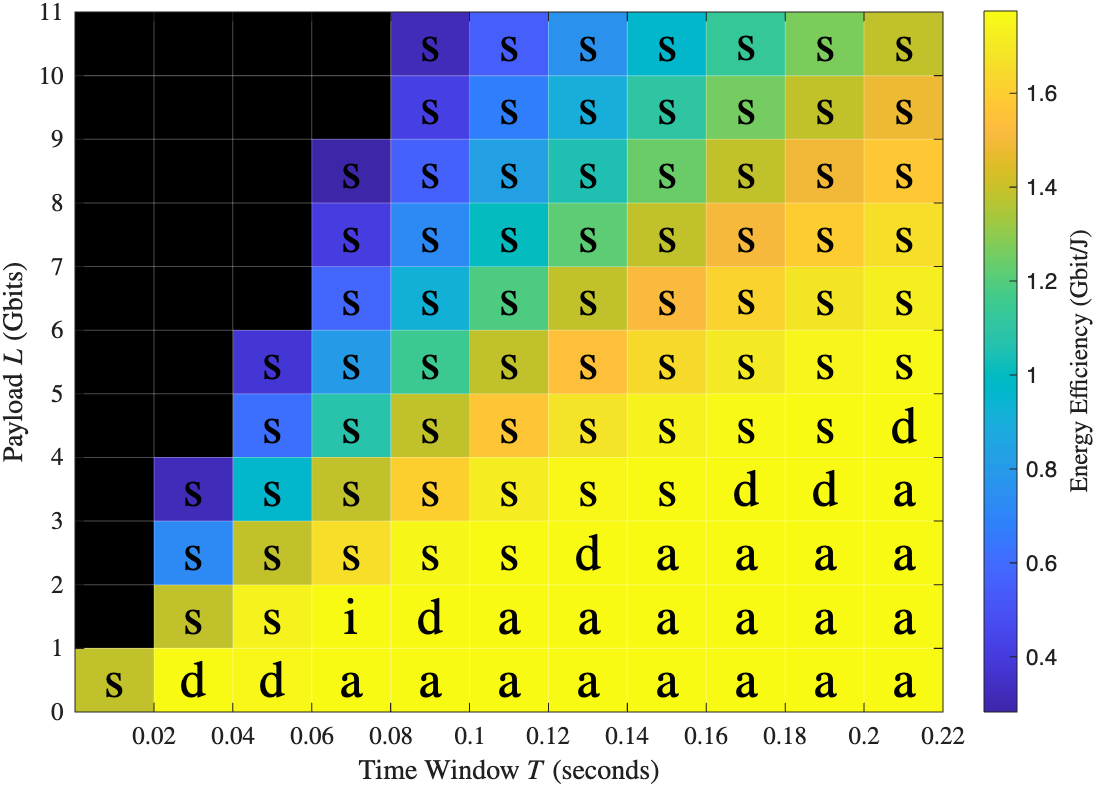}
	\caption{The EE and sleep mode selection for a BS for varying payload sizes and transmission deadlines.}
	\label{fig:heatmap} 
\end{figure}

Fig.~\ref{fig:U_sweep} plots the EE under each sleep mode as the service interval $U$ varies from $T$ to $2T$, for $L = 1.5$\,Gbits and $T = 70$\,ms, corresponding to the ``i'' square in the lower-left region of Fig.~\ref{fig:heatmap}. At $U = T$, idle mode wins: absolute and deep sleep are forced to consume part of the deadline on their wake-up transitions, leaving them with shorter active windows and higher required transmit rates. As $U$ grows, the extra idle time amortizes the wake-up cost. Deep sleep overtakes idle mode shortly after $U = T$, and absolute sleep takes over the optimum near $U = 95$\,ms.

\begin{figure}[t!]
	\centering 
	\includegraphics[width=0.85\columnwidth]{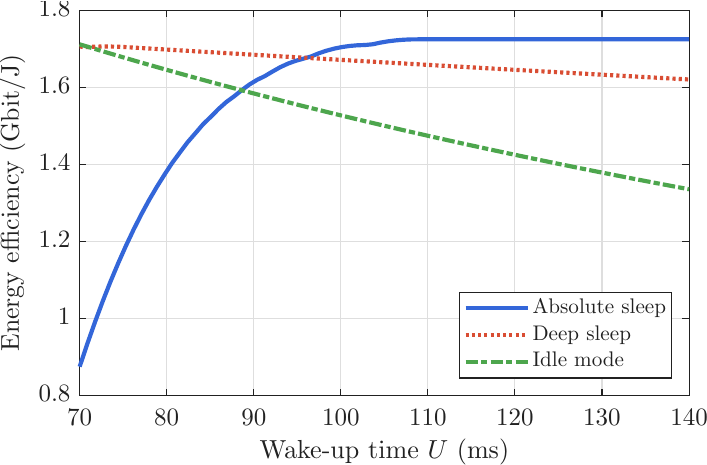}
	\caption{EE achievable under each sleep mode as a function of the service interval $U$.}%
	\label{fig:U_sweep} \vspace{-2mm}
\end{figure}

\section{Conclusion} \label{sec:conclusion}

This paper has explored the fundamentals of EE optimization for wireless communication links and uncovered several structural insights into how transmit power, bandwidth, number of antennas, and sleep-mode depth interact at the EE-optimal operating point. A central finding is that the input transmit power $P/\kappa$ exactly equals the total transceiver power $(D_0 + \nu B)M$ at the optimum, provided neither variable is at its practical upper bound. This equality serves as a design rule for hardware dimensioning. We further developed an algorithm that rapidly converges to the joint EE-maximizing configuration of transmit power, bandwidth, and antennas. 

A particularly striking result is that the EE-optimal SNR collapses to a universal numerical constant of approximately $5.93$\,dB, independent of channel and hardware parameters. This operating point arises as the closed-form optimum of the unconstrained EE objective where $\mu/B \to 0$, and reappears as the solution under deep sleep, where the fixed circuit power $\mu$ becomes a sunk cost paid both across active and sleep periods. The corresponding spectral efficiency maps onto 16-QAM in 3GPP NR \cite{3gpp_ts_38_214} once a practical FEC gap is included. The result provides a deployment-independent target: a macro cell with deep pathloss and a small cell with strong line-of-sight should both aim for the same operating SNR, dimensioning $P$ and $M$ via Theorem \ref{PoverM} to reach it.%

We further extended the framework to incorporate QoS requirements alongside advanced sleep modes, revealing how transition delays and deadline tightness jointly shape the optimal sleep-mode selection. The wider implication is that energy-efficient operation is fundamentally a multi-domain co-design problem: as traffic shifts toward bursty, latency-diverse sessions, treating transmission and sleep as a single coupled optimization rather than two decoupled subsystems is essential for lowering the EC while meeting QoS requirements.

We have deliberately focused on a single-link, narrowband setting to expose the underlying scaling laws. Natural extensions include carrier aggregation, multi-band systems, and wideband channel models. %

\bibliographystyle{IEEEtran}

\bibliography{IEEEabrv,Referenser}

\end{document}